\def\eqref#1{equation~\ref{#1}}
\def\1{\bm{1}}
\DeclareMathAlphabet{\mathsfit}{\encodingdefault}{\sfdefault}{m}{sl}
\SetMathAlphabet{\mathsfit}{bold}{\encodingdefault}{\sfdefault}{bx}{n}
\title{Tensor Network Structure Search Via\\ Canonical Dimension Tree Enumeration}
\author{\name Zheng Guo \email zhgguo@umich.edu\\
\addr Michigan Institute for Data \& AI in Society\\
University of Michigan
\AND
\name Aditya Deshpande \email dadity@umich.edu \\
\addr Aerospace Engineering Department \\
University of Michigan
\AND
\name Brian C. Kiedrowski \email{bckiedro@umich.edu} \\
\addr{Neuclear Engineering \& Radiological Sciences Department} \\
  University of Michigan
\AND
\name Xinyu Wang \email xwangsd@umich.edu \\
\addr{Electrical Engineering and Computer Science Department} \\
  University of Michigan
\AND
\name Alex A. Gorodetsky \email goroda@umich.edu\\
\addr Aerospace Engineering Department \\
University of Michigan
}
\begin{document}

\maketitle

\newcommand{\fixme}[1]{\textcolor{red!80}{[TODO: #1]}}

\newcommand{\algoname}{\textsc{Cadet}\xspace}

\newcommand{\ie}{\emph{i.e.}\ }

\newcommand{\nat}{\mathbb{N}}
\newcommand{\real}{\mathbb{R}}
\newcommand{\inds}{\mathcal{I}}
\newcommand{\ten}[1]{\mathscr{#1}}
\newcommand{\matric}[2]{\ten{#1}^{(#2)}}
\newcommand{\unfold}[2]{#1_{\langle#2\rangle}}
\newcommand{\size}[1]{\mathtt{size}\left(#1\right)}
\newcommand{\nodes}{\mathcal{V}}
\newcommand{\edges}{\mathcal{E}}
\newcommand{\len}{\mathtt{len}}
\newcommand{\contr}[2]{#1 \otimes_{\mathcal{T}} #2}
\newcommand{\mdiag}[1]{\mathrm{diag}(#1)}

\newcommand{\sem}[1]{\textsc{Exec}(#1)}
\newcommand{\expr}{E}
\newcommand{\seq}[2]{[#1,#2]}
\newcommand{\emptyprog}{[]}
\newcommand{\subst}[2]{#1 \mapsto #2}
\newcommand{\validsplit}{\textsc{Expand}}
\newcommand{\fillholes}{\textsc{FillHoles}}
\newcommand{\error}{\varepsilon}
\newcommand{\nsplit}{\mathtt{Split}}
\newcommand{\osplit}{\mathtt{OSplit}}
\newcommand{\isplit}{\nsplit}
\newcommand{\opmerge}{\texttt{Merge}}
\newcommand{\cost}[1]{\texttt{cost}(#1)}
\newcommand{\norm}[1]{\lVert#1\rVert_{F}}
\newcommand{\bigo}[1]{\smash{\mathcal{O}\left(#1\right)}}
\newcommand{\sv}[2]{\sigma_{#1}\left(#2\right)}
\newcommand{\contract}[4]{#1 \bigotimes_{#3}^{#4} #2}
\newcommand{\decomp}[3]{#1 \rightsquigarrow \contract{#2}{#3}{\mu}{1}}
\newcommand{\deltadecomp}[4]{#1 \rightsquigarrow_{#4} \contract{#2}{#3}{\mu}{1}}
\newcommand{\bipar}[2]{#1 | #2}

\newcommand{\unfoldten}[2]{\ten{#1}_{\langle#2\rangle}}

\newcommand{\rank}{\mathtt{rank}}
\newcommand{\indices}{\ind}
\newcommand{\isubset}{\ind_s}
\newcommand{\sketch}{\mathcal{S}}
\newcommand{\mapping}{\Omega}
\newcommand{\rankass}{\rho}
\newcommand{\many}[1]{\overline{#1}}
\newcommand{\hole}{\square}
\newcommand{\partition}[2]{\langle\mathbb{I}_{#1},\mathbb{I}_{#2}\rangle}
\newcommand{\skfill}[2]{#1[#2]}
\newcommand{\candidates}{\mathcal{C}}
\newcommand{\topk}{k_\theta}
\newcommand{\enumtrees}{\textsc{EnumDimTrees}\xspace}
\newcommand{\net}{\mathcal{N}}
\newcommand{\recon}[1]{\mathcal{R}_{#1}}
\newcommand{\dimtreeroot}[1]{\texttt{root}(#1)}
\newcommand{\dimtreechildren}[1]{\texttt{children}(#1)}

\algrenewcommand\algorithmicrequire{\textbf{Input}}
\algrenewcommand\algorithmicensure{\textbf{Output}}
\algnewcommand{\Yield}[1]{\textbf{yield} #1}
\algnewcommand{\IfReturn}[2]{\State \algorithmicif\ {#1} \algorithmicthen\ \algorithmicreturn\ {#2}}
\algnewcommand{\IfThen}[2]{\State \algorithmicif\ {#1} \algorithmicthen\ {#2}}
\algnewcommand{\ElsIfThen}[2]{\State \algorithmicelsif\ {#1} \algorithmicthen\ {#2}}

\definecolor{splitcolor}{RGB}{190,174,212}
\definecolor{nodecolor}{RGB}{86,180,233}

\newcommand{\samplesize}{10}
\newcommand{\hyperparamk}{1}

\theoremstyle{plain}
\newtheorem{theorem}{Theorem}[section]
\theoremstyle{plain}
\newtheorem{proposition}[theorem]{Proposition}
\newtheorem{lemma}[theorem]{Lemma}
\newtheorem{corollary}[theorem]{Corollary}
\theoremstyle{definition}
\newtheorem{definition}[theorem]{Definition}
\newtheorem{assumption}[theorem]{Assumption}
\theoremstyle{remark}
\newtheorem{remark}[theorem]{Remark}

\crefname{subsection}{section}{sections}
\Crefname{subsection}{Section}{Sections}
\Crefname{ALC@unique}{Line}{Lines}



\newcommand{\nbc}[3]{
	{\colorbox{#3}{\bfseries\sffamily\scriptsize\textcolor{white}{#1}}}
	{\textcolor{#3}{\sf\small \textit{#2}}}
}
\definecolor{zgcolor}{RGB}{86,180,233}

\definecolor{agcolor}{RGB}{255,0,0}
\newcommand{\AGnote}[1]{\nbc{AG}{#1}{agcolor}}
\newcommand{\AGcomment}[2]{\nbc{AG}{#1 (#2)}{agcolor}}

\newcommand{\rev}[1]{{#1}}

\begin{abstract}
Tensor networks provide a powerful framework for compressing multi-dimensional data. 
The optimal tensor network structure for a given data tensor depends on both data characteristics and specific optimality criteria, making tensor network structure search a challenging problem.
Existing solutions typically rely on sampling and compressing numerous candidate structures; these procedures are computationally expensive and therefore limiting for practical applications.
We address this challenge by decoupling topology enumeration from rank assignment search.
We first represent the search space using canonical dimension trees, a hierarchical structure that encodes potential network topology
through nested index partitions.
This representation inherently rules out redundant and suboptimal topologies by construction.
To mitigate the assessment bottleneck, we introduce a mechanism powered by the precomputation of a singular value map.
By archiving the singular values of all feasible tensor matricizations, we transform the evaluation of any candidate dimension tree into a constraint-solving problem.
\rev{This formulation yields an empirically near-optimal rank assignment via simple metadata lookups, allowing us to compute structural costs directly and bypass expensive on-the-fly tensor decompositions for all but the final selected candidate.}
Experimental results show that our approach \rev{accelerates the structure search by up to $10\times$ while achieving highly competitive compression ratios, outperforming standard tensor trains and hierarchical tuckers by up to $10\times$, and matching or exceeding state-of-the-art structure search tools.}
Notably, our approach scales to larger tensors that are unattainable by prior work. 
Furthermore, the discovered topologies generalize well to similar data; they achieve compression ratios up to $2.4\times$ better than tensor trains or hierarchical tuckers, while maintaining a search time of approximately $110$ seconds for 6D tensors of 1--2GB disk size.
\end{abstract}

\section{Introduction}\label{sec:intro}
Tensor networks have found widespread applications in machine learning~\citep{lebedev2014speeding,novikov2015tensorizing,phan2020stable,memmel2022position,doi:10.1137/22M1506857,panagakis2021tensor}, scientific computing~\cite{pmlr-v139-richter21a,PhysRevB.95.045117,PhysRevX.14.011009,ma2024approximate,doi:10.1137/20M1321838,doi:10.1137/19M1280156,doi:10.1137/22M153879X}, quantum computing~\cite{verstraete2008matrix,banuls2023tensor,Montangero_2018,doi:10.1137/080739379}, among many other fields, because they allow effective low-rank approximations of high-dimensional data.
Over the past decade, various tensor network structures---such as tensor trains (TT)~\citep{Oseledets_2011}, Tuckers~\citep{Tucker_1966}, and hierarchical Tuckers (HT)~\citep{ht}---have been deployed. 
Each of these structures offers distinct advantages for specific scenarios, with no single optimal representation across problem settings.
This observation brings up an important question: given a data tensor and an optimization objective, how could one efficiently determine the most suitable tensor network structure to achieve the desired goal?
This question has evolved into a research topic known as tensor network structure search (TN-SS).

TN-SS has two highly inter-related components: (1) the identification of a graph where nodes correspond to tensors and edges represent shared dimensions between connecting tensors; and (2) an assessment of the compression and approximation quality of each graph.
More specifically, the task of searching for optimal tensor network structures involves two subtasks:
\begin{enumerate*}[label=(\arabic*)]
    \item topology search: identify the optimal connections between nodes; and
    \item rank search: find the optimal dimension sizes (also called ranks) for edges.
\end{enumerate*}
While this division provides a clear framework to address the TN-SS problem, existing approaches face significant challenges in effectively solving these subtasks. We categorize these problems into two primary challenges: the combinatorial explosion of the search space, and the prohibitive computational cost of candidate evaluation.

%
The first challenge is the vast search space to jointly optimize tensor network topology and ranks.
A large portion of prior work has simplified this problem by focusing exclusively on either topology optimization with fixed ranks~\citep{Li_Sun_2020,PhysRevResearch.5.013031,Haberstich23}
or rank optimization for a fixed topology~\citep{pmlr-v32-rai14,mickelin2020algorithms,Sedighin2021Adaptive,Yin_Phan_Zang_Liao_Yuan_2022,pmlr-v202-ghadiri23a}. However, such isolated optimization often fails to identify the optimal structure where topology and rank are tightly coupled.
Recent advancements attempt to address both subtasks simultaneously by sampling candidate structures from the joint search space~\citep{hashemizadeh2020adaptive,Li_Zeng_Tao_Zhao_2022,Li_Zeng_Li_Caiafa_Zhao_2023,zengtngps,zheng2024svdinstn}. However, the required sample size grows rapidly with the data tensor size, leading to a large number of candidates.
The scalability issue is further exacerbated by the introduction of internal nodes, which enhances tensor network compression performance but can be inserted anywhere within tensor networks, exponentially increasing the topological search space. While greedy heuristics have been proposed to mitigate this explosion~\citep{hashemizadeh2020adaptive}, they often sacrifice global optimality for computational feasibility, leaving the discovery of near-optimal structures an open problem.

Orthogonal to the search space explosion is the high overhead associated with candidate assessment. To determine the quality of a proposed structure, one must typically check whether the given data tensor can be compressed into this structure within the prescribed error tolerance. Standard techniques, such as alternating least squares~\citep{als,hashemizadeh2020adaptive} or gradient descent~\citep{kolda2020stochastic,Li_Sun_2020,Li_Zeng_Tao_Zhao_2022}, involve iterative and computationally expensive tensor operations. When these assessment procedures are embedded into a large search loop, where thousands of candidates must be evaluated, the total search cost becomes prohibitive. Consequently, there is a critical need for more efficient assessment proxies that can accurately rank candidates.

\subsection{Efficient TN-SS via Canonical Dimension Tree Enumeration}
In this work, we propose a framework that decouples the topology and rank search, which is fundamentally different from prior work that attempt to sample the joint space of topologies and ranks.

\rev{
To achieve this decoupling, we only enumerate tensor network topologies and solve for near-optimal rank assignments.
Each network topology is represented as a \emph{canonical dimension tree} (CDT), which is a tree describing node connectivity and index placement without assigned edge weights.
CDTs generalize classical dimension trees used in tree tensor networks~\citep{falco2021tree,ht} by allowing multiple free indices to be assigned to both leaf and non-leaf nodes, while strictly enforcing a canonical representation format.
The structural constraints defined in CDTs naturally prune the search space, excluding duplicate enumeration and suboptimal topologies that violate valid CDT properties.
}

\rev{
Complementing this search space of network topologies, we introduce a constraint-solving framework as the engine for both candidate assessment and rank optimization. The constraint solving method is powered by a precomputed metadata map $\mapping$ that archives the singular values of the data tensor across all possible tensor unfoldings~\citep{kolda2009tensor}.
For any candidate CDT, the framework queries $\mapping$ to retrieve precomputed singular values and solves for a near-optimal rank assignment that satisfies a target error bound. The resulting ranks yield a rigorous over-approximation of the tree's optimal size, providing a performance score to guide the structural search and completely bypassing the need for explicit rank enumeration.
}

Synthesizing these components into a unified pipeline, the integrated workflow illustrated in \cref{fig:workflow} consists of two primary phases: dimension tree enumeration and constraint-based candidate scoring.
First, the algorithm generates a collection of CDTs, each standing for a tensor network skeleton. Due to the special design of CDTs, this phase rules out suboptimal skeletons from the search space. It also significantly reduces the search space size by ignoring rank enumerations. Second, for each dimension tree, the algorithm queries the singular value map $\mapping$ to solve the rank optimization problem, \rev{yielding an estimated compression score that allows us to rank the CDTs effectively.} By delaying computationally intensive tensor decompositions until the most promising skeletons and their ranks are determined, our strategy identifies high-performance network structures with the cost much lower than the traditional computational overhead.

\begin{figure}[tb]
    \centering
    \includegraphics[width=.95\textwidth]{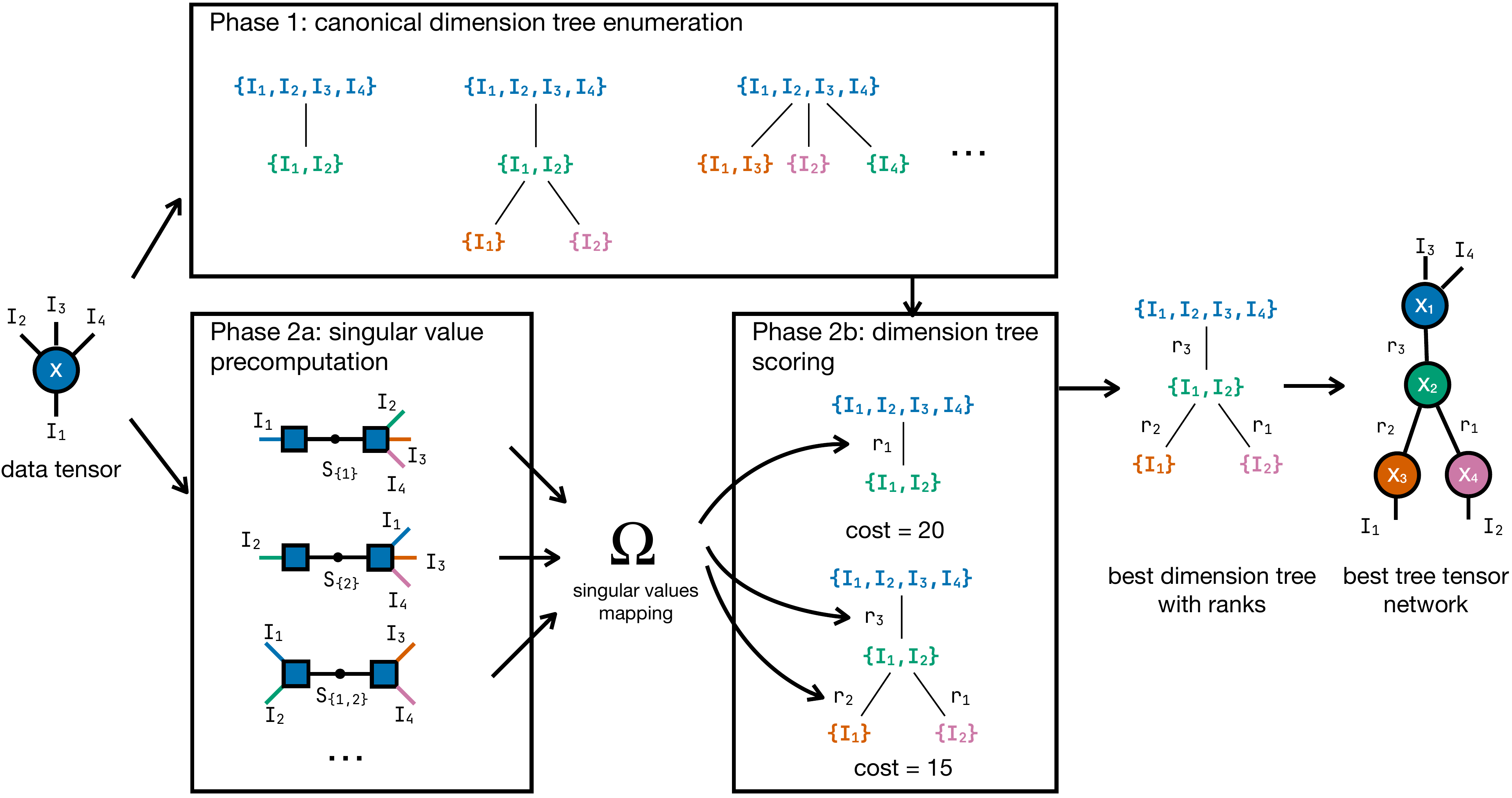}
    \caption{Overview of the proposed algorithm \algoname. This algorithm consists of two phases: (1) enumeration of dimension trees, each representing a tensor network skeleton, and (2) scoring of dimension trees. The second phase includes two steps: (2a) pre-computation of singular values for all matricizations of the input tensor, and (2b) scoring of dimension trees without explicit decomposition. 
    After the promising dimension trees are identified, tensor decomposition is run to obtain \rev{the best tree structure within the search space}.
    }
    \label{fig:workflow}
\end{figure}

\paragraph{Contributions}
In summary, we make the following contributions in this paper:
\begin{itemize}[itemsep=1pt]
    \item We introduce canonical dimension trees to represent tensor network topologies. This representation encodes network topologies as hierarchical index partitions, and inherently rules out suboptimal and redundant skeletons by construction, effectively regularizing the search space (\cref{sec:algo:dim-tree}).
    \item We propose a novel evaluation mechanism that bypasses the need for full tensor compressions during the search phase. By formulating rank assignment as a constraint-solving problem over precomputed singular values, we can simultaneously optimize bond dimensions and assign quality scores to dimension trees with low computational overhead (\cref{sec:algo:score}).
    \item We implement the proposed ideas as a tool named \algoname (CAnonical DimEnsion Trees).
    Through empirical evaluations, we demonstrate that \algoname runs significantly faster, achieves better compression ratios, and scales to larger tensors that baselines cannot handle. Additionally, the discovered topologies can be generalized to unseen data from the same source (\cref{sec:eval}).
\end{itemize}

\subsection{Related Work}\label{sec:related}
\paragraph{Tensor Network Structure Search}
TN-SS has been explored in prior work from different aspects.
Some studies~\citep{pmlr-v32-rai14,mickelin2020algorithms,Sedighin2021Adaptive,Yin_Phan_Zang_Liao_Yuan_2022,pmlr-v202-ghadiri23a} focus on optimizing rank assignments for specific tensor network topologies, proposing efficient algorithms to enumerate rank assignments against error constraints.
Others focus on topology search with fixed internal ranks~\citep{Li_Sun_2020,hashemizadeh2020adaptive,PhysRevResearch.5.013031,Haberstich23}.
Recent work~\citep{Li_Zeng_Tao_Zhao_2022,Li_Zeng_Li_Caiafa_Zhao_2023,zheng2024svdinstn,zengtngps,iacovides2025domain,wang2026renormalization} addresses the complete TN-SS problem by integrating topology and rank search, which often uses sampling-based methods and verifies if sampled structures satisfy error bounds.
\rev{Two recent work SVDinsTN~\citep{zheng2024svdinstn} and RGTN~\citep{wang2026renormalization} introduce alternatives that encode topology and rank search into a single optimization problem. However, they solve the optimization problem in different ways: SVDinsTN optimizes the objectives with the alternating direction method of multipliers, while RGTN adopts a multi-scale solving strategy to find structures that reach the target error bound.}
Our approach evolves traditional sampling-based methods by decoupling topology and rank search. We utilize canonical dimension trees to prune redundant candidates and a constraint-based scoring framework to accelerate candidate assessment. \rev{As demonstrated in \cref{sec:eval:real}, these strategies allow our method to identify structures with comparable or superior compression performance in significantly less search time than these baselines.}
\paragraph{Low-Rank Tensor Decomposition}
Low-rank tensor decomposition has been studied for many different structures.
\rev{CANDECOMP/PARAFAC (CP) decomposes a tensor as a sum of rank-one tensors~\citep{kolda2009tensor,battaglino2018practical,zhou2019tensor}.}
Tucker decomposition~\citep{Tucker_1966,hosvd,doi:10.1137/19M1261043,doi:10.1137/19M1257718} factorizes a high-order tensor into a core tensor with several low-rank tensors, one for each mode.
Tensor train decomposition~\citep{Oseledets_2011,tt-ice} expresses a high-order tensor as a linear multiplication of 3-order tensors.
Hierarchical Tucker decomposition~\citep{hackbusch2009new,ht,falco2021tree} generalizes tensor trains and tuckers to arbitrary trees, offering greater flexibility and compression potential.
Beyond tree-based structures, several prior work~\citep{espig2012note,handschuh2015numerical,zhao2016tensor,yang2017loop,mickelin2020algorithms} explores cyclic structures including tensor rings and tensor chains.
\rev{For a more comprehensive overview of tensor decomposition methodologies, we refer the readers to existing surveys~\citep{als,panagakis2021tensor}. In this work, we focus on automatic search of well-performing tree structures to compress input data tensors with a prescribed error bound rather than optimizing ranks or index permutations for fixed structures.}
Cyclic structures are left as future work.

\paragraph{Tree Generation}
The systematic generation of tree structures is a foundational problem in enumerative combinatorics, governed by the growth patterns of Catalan and Schr\"oder numbers~\citep{knuth1997art, stanley2015catalan}. Established algorithms for the exhaustive enumeration of these structures typically focus on the Constant Amortized Time (CAT) generation of non-isomorphic unlabeled trees using lexicographical sequences~\citep{aho1974design, beyer1980constant, zaks1980lexicographic, erdHos1989applications, sawada2006generating, kobayashi2025enumeration}.
In computational biology, these principles are adapted to model hierarchical complexity, such as in phylogenetics, where trees represent evolutionary lineages inferred from genetic data~\citep{semple2003phylogenetics, wirtz2022enumeration,johnson2012enumeration}.
Our work on dimension tree enumeration builds upon these combinatorial foundations but introduces a canonical form specialized for tensor network skeletons. Because tensor networks are inherently unrooted trees, they can be represented by multiple distinct rooted trees; our framework resolves this redundancy by defining a unique representative for each isomorphism class. This symmetry-breaking approach effectively prunes the search space.
%
\section{Preliminaries}
\begin{definition}[Tensor, Tensor Size]
Let $d \in \nat$ and $n_1, n_2, \ldots, n_d \in \nat$.
A tensor $\ten{X} \in \real^{n_1 \times n_2 \times \cdots \times n_d}$ is a $d$-dimensional array.
The $\mu^{th}$ dimension of $\ten{X}$ has a name $I_\mu$ with size $n_\mu$ for all $\mu \in \{1, 2, \ldots, d\}$.
%
%
The size of the tensor $\ten{X}$ is defined as $\size{\ten{X}} = \prod_{\mu=1}^{d} n_\mu$.
\end{definition}

\begin{definition}[Matricization~\rev{\citep{kolda2009tensor}}]
For a $d$-dimensional tensor $\ten{X} \in \real^{n_1 \times n_2 \times \cdots \times n_d}$ with indices $\{I_1, \ldots, I_d\}$, let $\pi_1,\pi_2,\ldots \pi_d$ be a permutation of $1,2,\ldots, d$. We partition the dimensions of $\ten{X}$ into $\inds_s = \{I_{\pi_1}, \ldots, I_{\pi_k}\}$ and $\overline{\inds}_{s} = \{I_{\pi_{k+1}}, \ldots, I_{\pi_{d}}\}$ such that $\inds_s \cap \overline{\inds}_s = \emptyset$ and $\inds_s \cup \overline{\inds}_s = \{I_1, I_2,\ldots, I_d\}$. The \emph{$\inds_s$-matricization} of $\ten{X}$ is
\begin{equation}
    \matric{X}{\inds_s} = \ten{X}(i_{\pi_1}, \ldots, i_{\pi_k}; i_{\pi_{k+1}}, \ldots, i_{\pi_{d}})
\end{equation}
In other words, indices in $\inds_s$ enumerate the rows of $\matric{X}{\inds_s}$, and the remaining indices enumerate the columns.
\end{definition}

\begin{definition}[Tensor Contraction, Decomposition, and Truncated Decomposition]
The contraction between two tensors $\ten{A} \in \real^{n_1 \times n_2 \times \cdots \times n_{\mu}}$ and $\ten{B} \in \real^{n_{\mu} \times n_{\mu+1} \times \cdots \times n_{d}}$ along the $\mu^{th}$ dimension of $\ten{A}$ and first dimension of $\ten{B}$ is an operation represented as $\contract{\ten{A}}{\ten{B}}{\mu}{1}$.
The resulting tensor $\ten{C}$ is computed as
\begin{equation}
    \ten{C}(i_1, \ldots, i_{\mu-1}, i_{\mu+1}, \ldots, i_{d}) =
    \sum_{k=1}^{n_\mu} \ten{A}(i_1, \ldots, i_{\mu-1}, k) \times \ten{B}(k, i_{\mu+1}, \ldots, i_{d})
\end{equation}

The inverse operation of contraction is decomposition.
A tensor $\ten{C} \in \real^{n_1 \times n_2 \times n_{\mu-1} \times n_{\mu+1} \times \cdots \times n_d}$ can be decomposed into two smaller tensors $\ten{A}$ and $\ten{B}$ with respect to a partition of its indices $\bipar{\{I_i\}_{i=1}^{\mu-1}}{\{I_{j}\}_{j=\mu+1}^{d}}$ \footnote{We use $\{X_i\}_{i=a}^{b}$ to represent $\{X_a, X_{a+1}, \ldots, X_{b}\}$}, written as $\ten{C} = \contract{\ten{A}}{\ten{B}}{\mu}{1}$, where $\ten{A} \in \real^{n_1 \times n_2 \times \cdots \times n_{\mu}}$, $\ten{B} \in \real^{n_{\mu} \times n_{\mu+1} \times \cdots \times n_{d}}$, and $n_{\mu} = \min(\prod_{i=1}^{\mu-1} n_i, \prod_{i=\mu+1}^{d} n_i)$.
Such decomposition can be done through QR or singular value decomposition (SVD).

A rank-$r$ truncated decomposition of $\ten{C}$ is the operation that approximates $\ten{C}$ with $\contract{\ten{A}}{\ten{B}}{\mu}{1}$ such that the contraction dimension size $n_{\mu} = r$.
This operation is realized through truncated SVD~\citep{hansen1987truncated}:
given a matrix $M$, suppose its SVD result is $M = U \Sigma V$ where $U = [u_1, u_2, \ldots, u_n]$, $V = [v_1, v_2, \ldots, v_n]^{T}$, and $\Sigma = \texttt{diag}(\sigma_1, \sigma_2, \ldots, \sigma_{n})$ contains singular values in descending order, then its rank $r$ truncation is $\hat{M} = {U[:, :r]}{\Sigma[:r, :r]V[:r]}$.
This can be easily extended to high-order tensors when reshaping is inserted to make the conversion between matrices and tensors.
\end{definition}

\begin{definition}[Tensor Network, Tree Tensor Network]
A tensor network is an undirected graph $\net=(\nodes,\edges)$ where vertices $\nodes$ are tensors, and edges $\edges$ are tuples of two node names and their shared index name~\rev{\citep{ye2018tensor}}. Tensor networks without cycles are called \emph{tree tensor networks}~\rev{\citep{falco2021tree}}. The tensor represented by a graph $\net$ is the contraction of all tensors over shared modes, denoted by $\recon{\net}$. The size of a tensor network is $\size{\net} = \sum_{\ten{X} \in \nodes} \size{\ten{X}}$.
We call edges with a dangling end \emph{free indices}, and those without dangling ends \emph{contracted indices}.
For example, in \cref{fig:workflow}, the resulting network structure has four nodes, four free indices $I_1, I_2, I_3, I_4$ and three contracted indices $r_1, r_2, r_3$.
The represented tensor of this network is
\begin{equation}
\recon{\net}(i,j,k,l) = \sum_{a=1}^{r_1}\sum_{b=1}^{r_2}\sum_{c=1}^{r_3} \ten{X}_1(i,a) \times \ten{X}_2(j,b) \times \ten{X}_3(a,b,c) \times \ten{X}_4(c,k,l)
\end{equation}
\end{definition}



%
\begin{definition}[Tensor Network Structure Search]
A TN-SS problem is a tuple $(\ten{X}, \error)$, where $\ten{X}$ is the data tensor and $\error$ is a prescribed error bound. The goal of the TN-SS algorithm is to solve the optimization problem
\begin{equation}
    \arg\min_{\net} \ \size{\net} \ 
    \textrm{s.t.} \  \norm{\recon{\net} - \ten{X}}\leq \error\norm{\ten{X}}
\end{equation}
In other words, the TN-SS problem aims at finding the most compressed tensor network within a given error bound. In this work, we target arbitrary tree structures, excluding structures with cycles.
\end{definition}

\section{TN-SS via Canonical Dimension Tree Enumeration}\label{sec:algo}
This section details the proposed algorithm \algoname. We first present the high level pipeline of the algorithm. Followed by that, we formalize the search space through the lens of dimension trees. Finally, we introduce the dimension tree scoring method, which utilizes precomputed singular values to estimate approximation errors without incurring the cost of iterative SVDs.

\subsection{The High-Level Algorithm}\label{sec:algo:high-level}
\Cref{alg:high-level} outlines the primary stages of the \algoname framework. The workflow adopts a paradigm of candidate enumeration followed by systematic assessment, utilizing a priority queue $\candidates$ to maintain the highest-quality structures identified during the search (Line \ref{alg:high-level:init-cands}). The algorithm iteratively traverses the search space by generating candidate dimension trees (Lines \ref{alg:high-level:enum-start}--\ref{alg:high-level:enum-end}). Within this loop, the quality of each candidate $T_\inds$ is evaluated by calculating a cost $c$ (Line \ref{alg:high-level:cost}), which determines the priority of the structure within $\candidates$ (Line \ref{alg:high-level:queue-update}). Once the top-$\topk$ candidate structures are established, full tensor decompositions are performed to compress the input data tensor $\ten{X}$ into these specific topologies, ultimately returning the structure that yields the minimum tensor network size (Lines \ref{alg:high-level:decomp-start}--\ref{alg:high-level:decomp-end}).

\begin{algorithm}[t]
\caption{Proposed tensor network structure search algorithm}\label{alg:high-level}
\begin{algorithmic}[1]
    \Require Data tensor $\ten{X}$, error bound $\error$, number of selected candidate structures $\topk$, and maximum number of nodes in the dimension tree $S_\theta$
    \Ensure A tensor network $\net$ such that $\size{\net} \leq \size{\ten{X}}$, and $\norm{\recon{\net} - \ten{X}} \leq \error \norm{\ten{X}}$
    \Function{\algoname}{$\ten{X}, \error, k_\theta, S_\theta$}
        \State $\mapping \gets \Call{Preprocess}{\ten{X}, \error}$
        \label{alg:high-level:preprocess}
        \Comment{Precompute singular values for all tensor matricizations (\cref{sec:algo:preprocess})}
        \State $\candidates \gets \emptyset$
        \label{alg:high-level:init-cands}
        \For{$T_\inds \in \Call{\enumtrees}{\textsc{Indices}(\ten{X}), S_\theta}$}
        \label{alg:high-level:enum-start}
        \Comment{Enumerate candidate dim trees (\cref{sec:algo:dim-tree})}
            \State $c, \rankass \gets \Call{GetCost}{\mapping, \ten{X}, \error, T_\inds}$
            \label{alg:high-level:cost}
            \Comment{Compute cost via constraint solving (\cref{sec:algo:cost})}
            \State add $(T_\inds, c, \rankass)$ to $\candidates$ and remove trees with large costs to maintain $|\candidates| \leq \topk$
            \label{alg:high-level:queue-update}
        \EndFor\label{alg:high-level:enum-end}
        \State $\net_{\min} \gets (\{\ten{X}\}, \emptyset)$\label{alg:high-level:decomp-start}
        \For{$(T_\inds, c, \rho) \in \candidates$}
        \State $\net_{\min} \gets \min(\net_{\min}, \Call{Decompose}{\ten{X}, T_\inds, \error, \rho})$
        \label{alg:high-level:decomp}
        \Comment{Decompose $\ten{X}$ into $T_\inds$ with ranks $\rho$ (\Cref{sec:algo:decompose})}
        \EndFor
        \State \Return $\net_{\min}$\label{alg:high-level:decomp-end}
    \EndFunction
\end{algorithmic}
\end{algorithm}

There are two technical insights in this algorithm design.
First, the search space is restricted to a specific subset of dimension trees to effectively exclude suboptimal and duplicate network skeletons from the enumeration process (\cref{sec:algo:dim-tree}). Second, \algoname eliminates the requirement for tensor decompositions during the screening of candidates. By executing a one-time preprocessing of the data tensor to construct a metadata map $\mapping$, the algorithm stores the singular values for all valid index partitions, or tensor matricizations (Line \ref{alg:high-level:preprocess}). Leveraging the property that singular values associated with a specific index partition are monotonically non-increasing following tensor truncation, the framework utilizes these precomputed singular values to implement a computationally efficient proxy for candidate cost estimation (\cref{sec:algo:score}).

\subsection{Network Topology Exploration via Dimension Tree Enumeration}\label{sec:algo:dim-tree}
In this section, we establish a systematic framework for navigating the space of tensor network topologies. We first introduce \emph{generalized dimension trees} (GDTs) as a formal representation of network topologies. By relaxing the exhaustive partitioning requirements of traditional dimension trees~\citep{ht,falco2021tree}, GDTs can represent a significantly broader class of networks that allow multiple free indices on both leaf and non-leaf nodes.

To ensure search efficiency, we incorporate pruning rules within this representation to eliminate suboptimal topologies early in the process. \rev{Furthermore, since a single network topology can be mapped to multiple valid GDTs, we define a \emph{canonical dimension tree} (CDT) that establishes a one-to-one mapping between the tree structure and the underlying topology, effectively removing redundancy during the enumeration phase.}

\begin{definition}[Generalized Dimension Trees]
Let $\inds = \{I_1, \dots, I_d\}$ be the set of free indices associated with a $d$-dimensional tree tensor network $\net$. A tree $T_\inds$ is a generalized dimension tree of $\inds$ if
\begin{enumerate}[label=(\alph*)]
    \item The root node is the full set $\inds$; every other node $v$ corresponds to a non-trivial subset of $\inds$;
    \label{def:dim-tree:nodes}
    \item For any node $v$,
    its children $\{c_1, c_2, \dots, c_k\}$ constitute non-trivial, mutually disjoint subsets of $v$,
    such that $c_i \subset v, c_i \ne \emptyset$ for all $i$ and $c_i \cap c_j = \emptyset$ for all $i \not= j$; \label{def:dim-tree:internal}
    \item For any pair of children $u, v$ of the root,
    $u \cup v \ne \inds$.\label{def:dim-tree:full-partition}
\end{enumerate}

Nodes with no children are called leaf nodes.
We use $\dimtreechildren{T_\inds, \inds_s}$ to represent children nodes of $\inds_s \in T_\inds$.
\end{definition} 

The main differences between GDTs and traditional dimension trees are requirements \ref{def:dim-tree:internal}, and \ref{def:dim-tree:full-partition}. Requirement \ref{def:dim-tree:internal} relaxes the exhaustive partitioning of indices among children found in traditional dimension trees. Instead, it allows a node to contain indices that do not appear in any of its children; these indices are thus designated as free indices of that specific node.
Requirement \ref{def:dim-tree:full-partition} excludes structures with suboptimal nodes.

\begin{figure}[t]
    \centering
    \subfloat[Free indices on the root]{
        \includegraphics[width=0.5\linewidth]{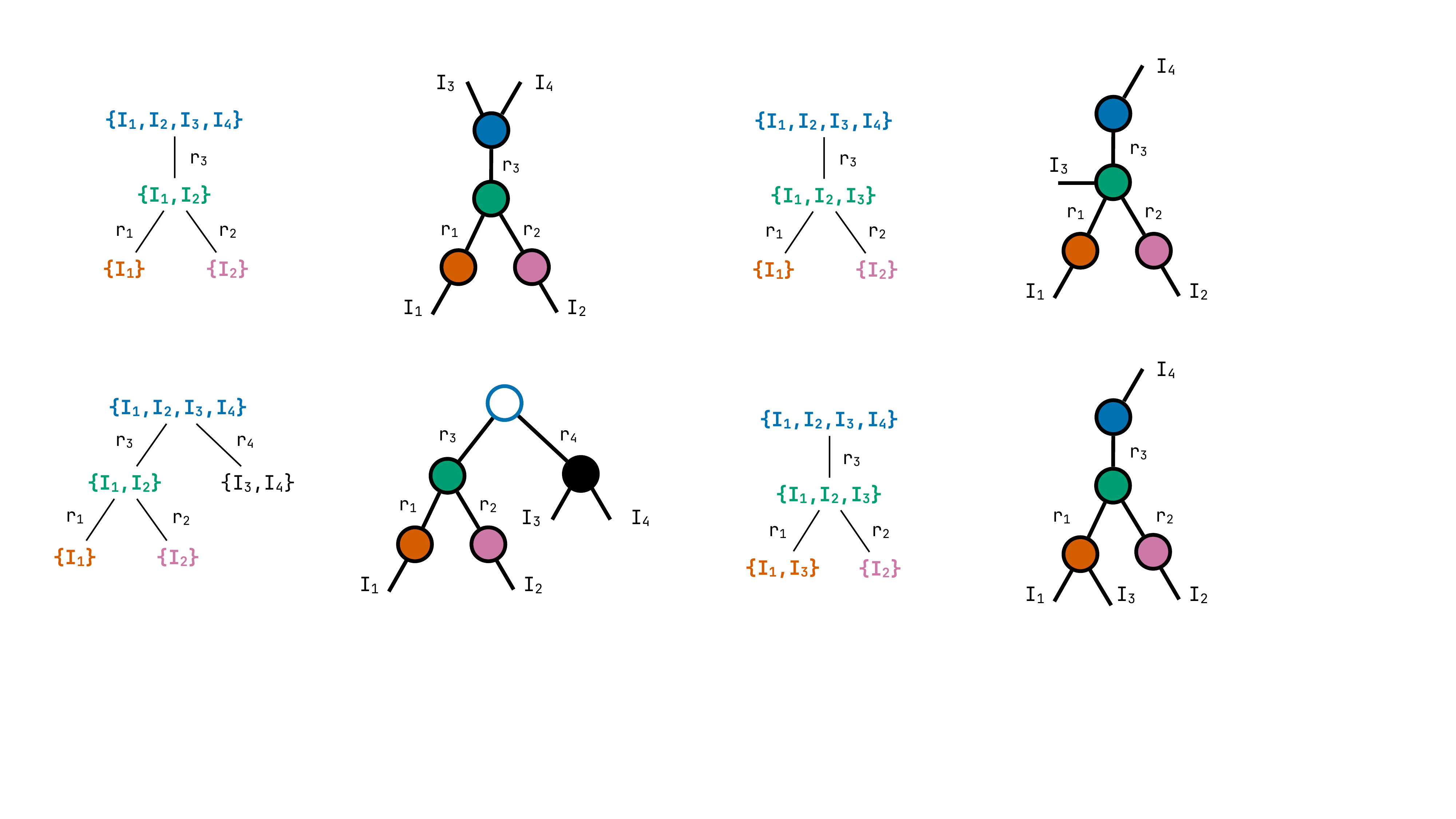}
        \label{fig:dim-trees:normal}
    }
    \subfloat[Free indices on non-leaf nodes]{
        \includegraphics[width=0.4\linewidth]{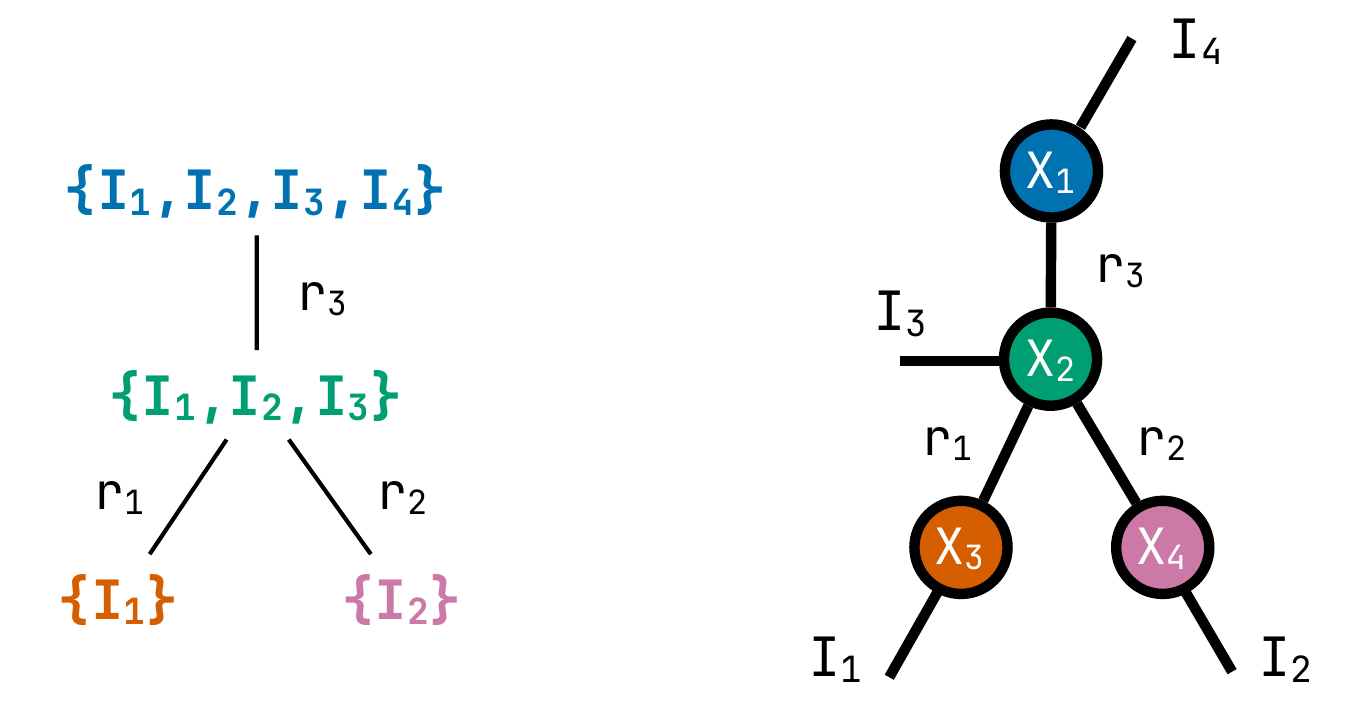}
        \label{fig:dim-trees:internal}
    }
    
    \subfloat[Multiple free indices on leaf nodes]{
        \includegraphics[width=0.4\linewidth]{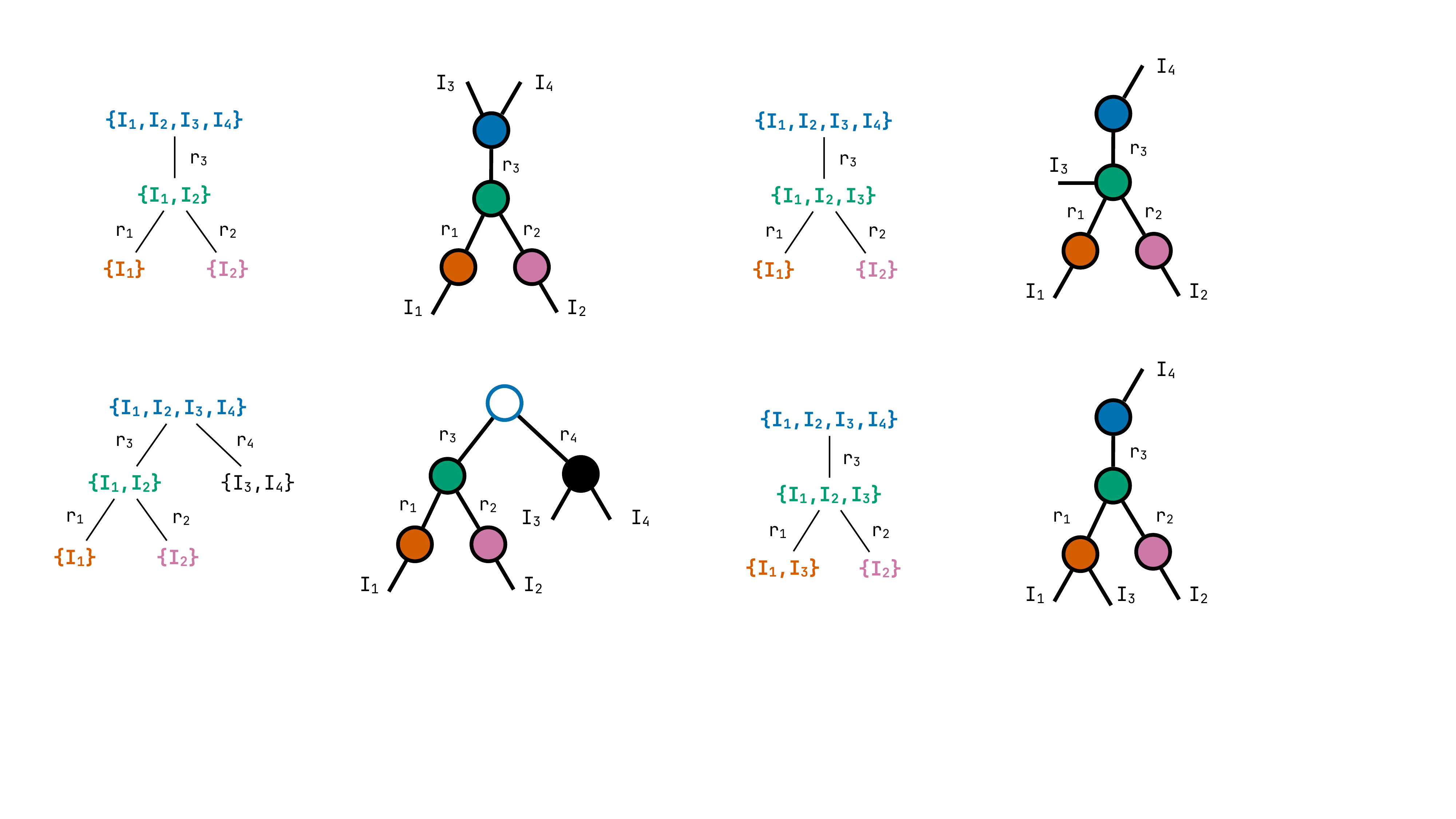}
        \label{fig:dim-trees:many-leaf}
    }
    \subfloat[Suboptimal structures (invalid dimension tree)]{
        \includegraphics[width=0.5\linewidth]{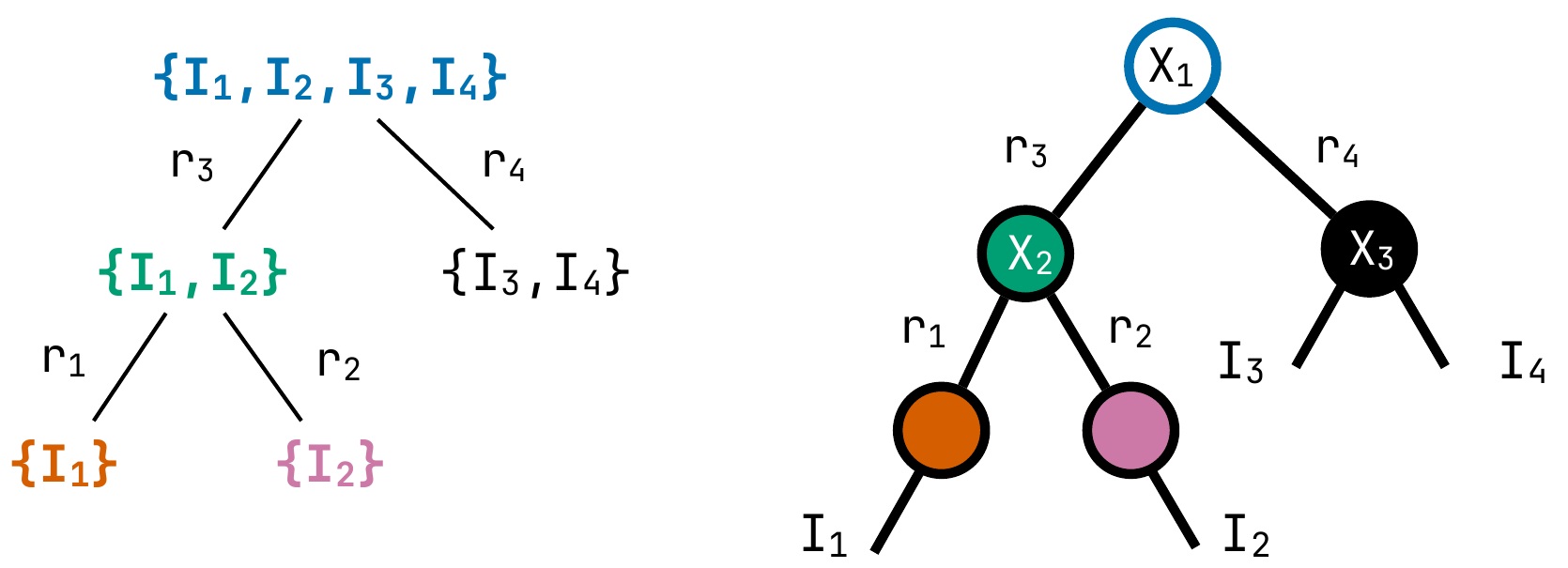}
        \label{fig:dim-trees:suboptimal}
    }
    \caption{
    Examples of generalized dimension trees and their corresponding tensor network skeletons. Unfilled nodes represent structures that lead to sub-optimality, which can be merged with adjacent neighbors to reduce the total network size. Node labels in (b) and (d) are provided for ease of reference in the text.
    }
    \label{fig:dim-trees}
\end{figure}

\Cref{fig:dim-trees} illustrates the versatility of GDTs in representing diverse tensor network topologies. Unlike conventional dimension trees where the union of children's indices must exhaustively cover the parent's set, this generalized framework accommodates indices at multiple hierarchical levels. As shown in \cref{fig:dim-trees:normal} and \cref{fig:dim-trees:internal}, free indices can be associated directly with the root and other non-leaf nodes, respectively.
Furthermore, \cref{fig:dim-trees:many-leaf} demonstrates that leaf nodes may support multiple free indices simultaneously. 
This representation allows for the identification of suboptimal configurations; for instance, the dimension tree in \cref{fig:dim-trees:suboptimal} is invalid because it contains two nodes $\{I_1, I_2\}$ and $\{I_3, I_4\}$ that sums up to the total index set, violating the requirement \ref{def:dim-tree:full-partition} of dimension tree's definition. 
While this structure resembles a traditional Hierarchical Tucker (HT) format~\citep{ht}, the root node $\ten{X}_1$ serves no purpose and only adds unnecessary parameters. 
Specifically, without merging, the total size of nodes $\ten{X}_1$, $\ten{X}_2$, and $\ten{X}_3$ is $r_3r_4 + r_1r_2r_3 + r_4n_3n_4$. The uncolored node $\ten{X}_1$ can be merged with adjacent neighbors to reduce the total network size:
if $r_3 < r_4$, $\ten{X}_1$ should be merged to the right neighbor $\ten{X}_3$, reducing the total size into $r_1r_2r_3 + r_3n_3n_4$;
if $r_3 > r_4$, $\ten{X}_1$ should be merged to the right neighbor $\ten{X}_2$, reducing the total size into $r_1r_2r_4 + r_4n_3n_4$.
Both reduced sizes are strictly lower than the unmerged sizes.
To summarize, this GDT mapping ensures that the search space is both inclusive of necessary topologies and pruned of inherently inefficient skeletons.

\begin{definition}[Generalized Dimension Tree Size]
Given a generalized dimension tree $T_\inds$ for free indices $\inds = \{I_1, \ldots, I_d\}$ with $n+1$ nodes $\inds_0 = \inds$ and $\inds_1 , \cdots, \inds_n \subset \inds$, and a rank assignment $\rankass = \{r_i\}_{i=1}^{n}$ where $r_i$ is the rank size for the edge between nodes $\inds_i$ and its parent, and $r_0 = 1$, the size of the dimension tree $T_\inds$ is recursively defined as
\begin{equation}
\size{T_\inds \mid {\{r_i\}_{i=1}^{n}}} = \sum_{i=0}^{n} \left(r_i \times \prod_{I \in \inds_i \backslash \cup_c \inds_c}\size{I} \times \prod_{c} r_c + \sum_{c} \size{T_{\inds_c} \mid {\{r_i\}_{i=1}^{n}}}\right)
\end{equation}
where $\inds_c$ are the children of the node $\inds_i$, and $T_{\inds_c}$ are the subtrees rooted at each child node.
\end{definition}
Intuitively, the size of a dimension tree sums up the size of all nodes inside it.
As an example, consider the dimension tree in \cref{fig:dim-trees:internal} with rank assignments $r_1, r_2, r_3$ corresponding to the edges between $(\ten{X}_2, \ten{X}_3)$, $(\ten{X}_2, \ten{X}_4)$, and $(\ten{X}_1, \ten{X}_2)$ respectively. Then the size of this dimension tree is
\[
\begin{aligned}
    \size{T_{I_1, I_2, I_3, I_4} \mid \{r_1, r_2, r_3\}} &= \size{I_4} \times r_3 \hspace{10em} \text{\color{gray} size of node $\ten{X}_1$}\\
    &+ r_3 \times \size{I_3} \times r_1 \times r_2 \hspace{5.9em} \text{\color{gray} size of node $\ten{X}_2$}\\
    &+ r_1 \times \size{I_1} \hspace{10em} \text{\color{gray} size of node $\ten{X}_3$}\\
    &+ r_2 \times \size{I_2} \hspace{10em} \text{\color{gray} size of node $\ten{X}_4$}
\end{aligned}
\]

\begin{figure}[t]
    \centering
    \subfloat[Rooted at the node with $\{I_3, I_4\}$]{
        \includegraphics[width=0.5\linewidth]{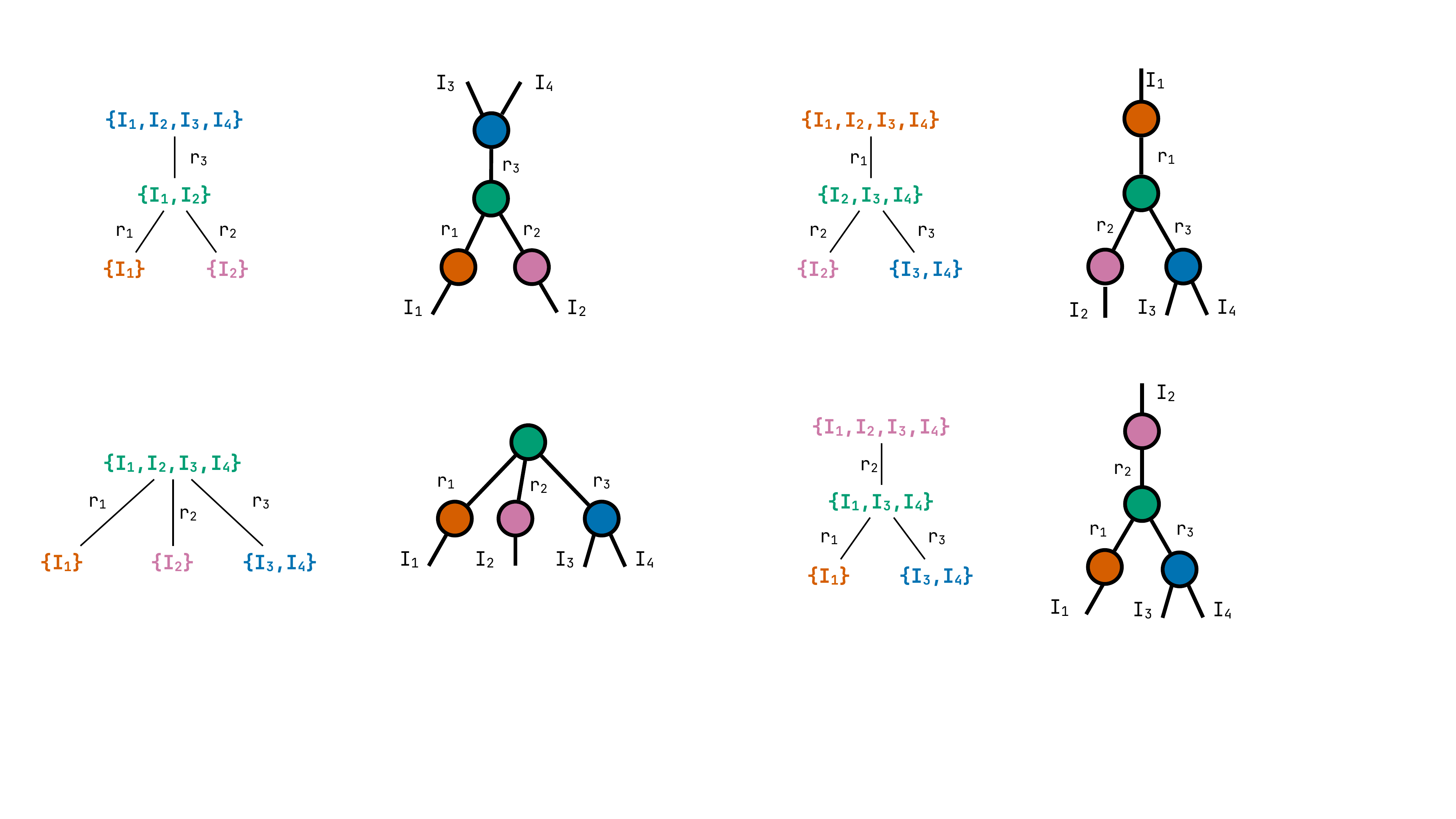}
        \label{fig:dim-trees:roots:canonical}
    }
    \subfloat[Rooted at the node with $\{I_1\}$]{
        \includegraphics[width=0.45\linewidth]{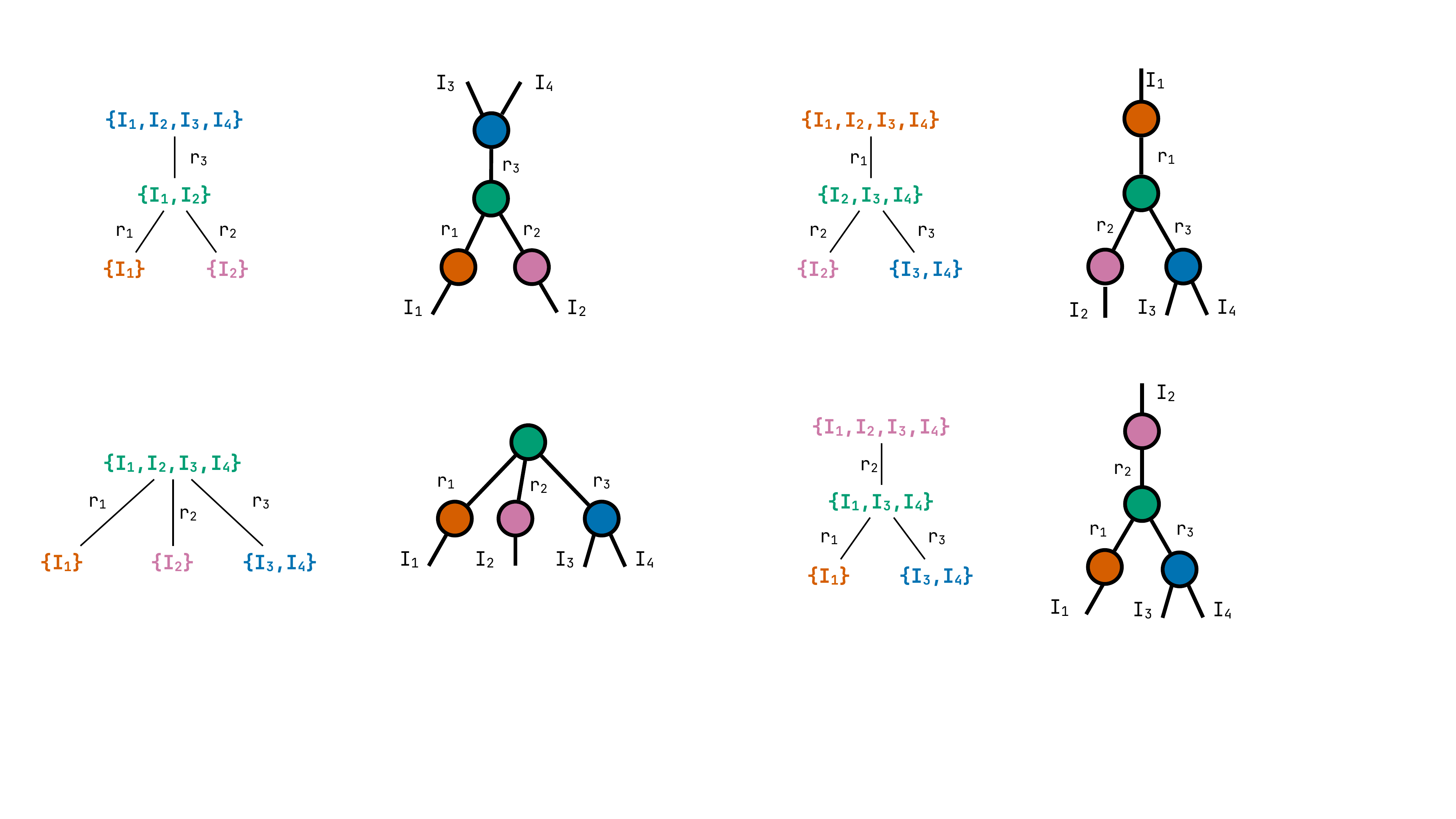}
        \label{fig:dim-trees:roots:I1}
    }
    
    \subfloat[Rooted at a node without free indices]{
        \includegraphics[width=0.5\linewidth]{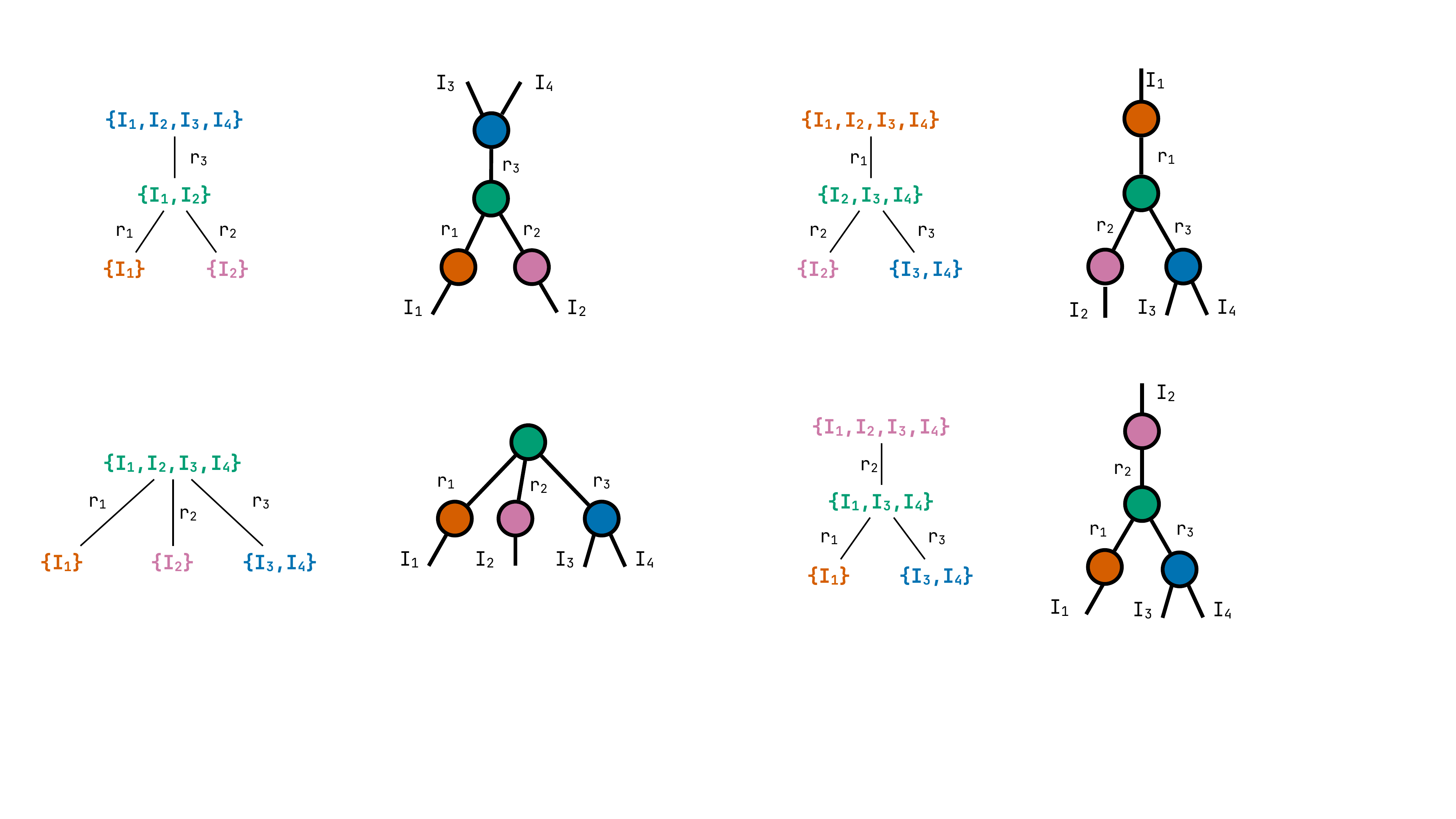}
        \label{fig:dim-trees:roots:internal}
    }
    \subfloat[Rooted at the node with $\{I_2\}$]{
        \includegraphics[width=0.42\linewidth]{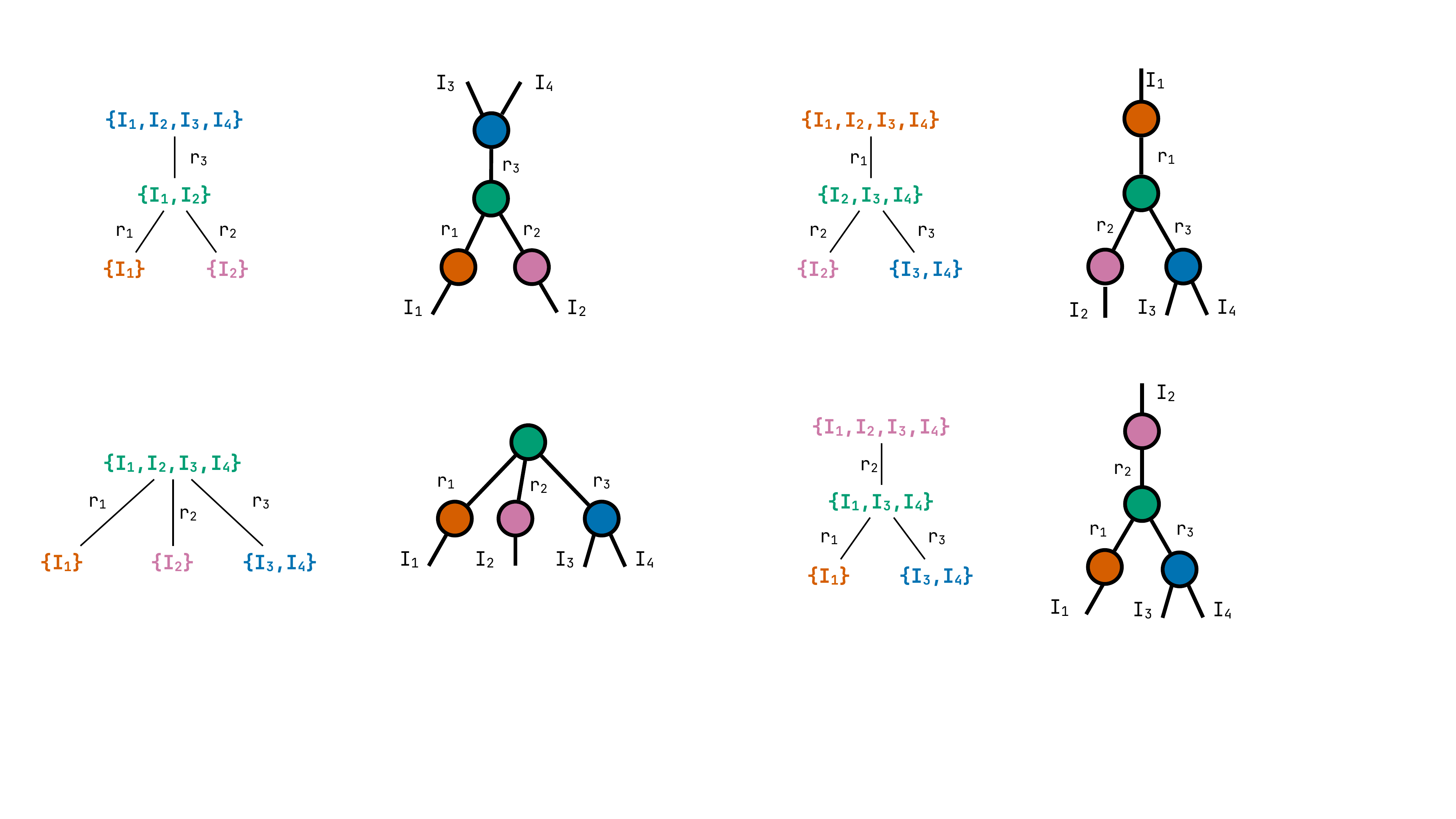}
        \label{fig:dim-trees:roots:I2}
    }
    \caption{Generalized dimension trees for the same tensor network skeleton at different roots. \cref{fig:dim-trees:roots:canonical} is in canonical form whereas the other three are not.}
    \label{fig:dim-trees:roots}
\end{figure}

However, GDTs are not unique to a given tensor network. As illustrated in \cref{fig:dim-trees:roots}, a single network can yield multiple distinct dimension trees depending on the node selected as the root. This non-uniqueness introduces a significant challenge for network skeleton enumeration: the search algorithm must effectively prune redundant exploration paths to avoid the computational cost of processing equivalent structures. To eliminate this redundancy, we define a canonical dimension tree (CDT) for each tensor network and restrict the enumeration process exclusively to these canonical forms.

\begin{definition}[Order on Index Subsets]\label{def:order}
Let $\inds = \{I_i\}_{i=1}^{d}$ be a set of indices with a lexicographical order such that $I_1 < I_2 < \cdots < I_d$. For any two subsets $\inds_a, \inds_b \subseteq \inds$, we define a total order $\inds_a < \inds_b$ according to the following criteria:
\begin{enumerate}[label=(\alph*)]
    \item Cardinality: $\inds_a < \inds_b$ if $|\inds_a| < |\inds_b|$
    \item Lexicographical precedence: If $|\inds_a| = |\inds_b| = k$, and their respective sorted elements are $(a_1, \dots, a_k)$ and $(b_1, \dots, b_k)$, then $\inds_a < \inds_b$ if there exists $1 \leq j \leq k$ such that $a_j < b_j$ and $a_i = b_i$ for all $i < j$.
\end{enumerate}
\end{definition}

Under this ordering, the subsets follow the sequence $\{I_1\} < \{I_2\} < \{I_1, I_2\} < \{I_1, I_4\} < \{I_2, I_3\}$. This ordering allows us to uniquely identify a representative structure, which we define as follows:

\begin{definition}[Canonical Dimension Trees]
Let $\inds$ be the set of $d$ free indices of a tree tensor network $\net$.
A generalized dimension tree $T_\inds$ is considered a canonical dimension tree of $\net$ if every node $\inds_s \in T_\inds$ satisfies either $\inds_s = \inds$ or $\inds_s < \inds \setminus \inds_s$. 
\end{definition}

\rev{Among the four GDTs illustrated in \cref{fig:dim-trees:roots}}, only \cref{fig:dim-trees:roots:canonical} is in canonical form. The other trees fail this criterion due to specific node violations: the nodes $\{I_2, I_3, I_4\}$ in \cref{fig:dim-trees:roots:I1}, $\{I_3, I_4\}$ in \cref{fig:dim-trees:roots:internal}, and $\{I_1, I_3, I_4\}$ in \cref{fig:dim-trees:roots:I2} all violate the canonical condition because their respective complement sets are smaller under the order in Definition \ref{def:order}.

\begin{algorithm}[tb]
\captionsetup{font=small} 
\caption{Discover all possible canonical dimension trees for a given data tensor.}\label{alg:enumtrees}
\begin{algorithmic}[1]
    \Require{A set of free indices $\inds$, and maximum number of nodes in the resulting dimension tree $S_\theta$}
    \Ensure{A set of canonical dimension trees $T_\inds$}
    \Function{\enumtrees}{$\inds, S_{\theta}$}
        \State Initialize $T_\inds$ with $\inds$ being the root \label{alg:enumtrees:init}
        \State \Yield from \Call{GrowTree}{$T_\inds, S_{\theta}$} \label{alg:enumtrees:grow}
    \EndFunction
    \Statex
    \Function{GrowTree}{$T_\inds, S_{\theta}$}
        \If{$\size{T_\inds} \geq S_\theta$} 
            \Return \label{alg:enumtree:size-limit}
        \EndIf
        \For{$\inds_s \in T_\inds$}
        \Comment{Enumerate all possible expansions}
        \label{alg:enumtree:node-enum-start}
            \State $\inds_r \gets \inds_s \setminus \cup_{c \in \dimtreechildren{T_\inds, \inds_s}} c$
            \Comment{Get remaining indices on the current node}
            \label{alg:enumtree:remaining}
            \For{$\inds_i \subset \inds_r$ s.t. $\inds_i \ne \emptyset$}\Comment{Separate a subset from remaining indices}\label{alg:enumtree:subset-enum-start}
                \IfThen{$\exists \inds_j \in T_\inds.\ \inds_j \cup \inds_i = \inds$}{\textbf{continue}}
                \Comment{Well-formedness of generalized dimension trees}
                \label{alg:enumtree:dimtree}
                \IfThen{$\inds \setminus \inds_i < \inds_i$}{\textbf{continue}}
                \Comment{Canonical form of dimension trees}
                \label{alg:enumtree:canonical}
                \IfThen{$\exists \inds_j \in T_\inds.\ \inds_j < \inds_i$}{\textbf{continue}}
                \Comment{Enforce the expansion order to avoid duplicates}
                \label{alg:enumtree:expansion-order}
                \State Add $\inds_i$ to $T_\inds$ as a child of $\inds_s$ and \Yield $T_\inds$
                \label{alg:enumtree:yield}
                \State \Call{GrowTree}{$T_\inds, S_{\theta}$}
                \label{alg:enumtree:recursion}
            \EndFor\label{alg:enumtree:subset-enum-end}
        \EndFor\label{alg:enumtree:node-enum-end}
    \EndFunction
\end{algorithmic}
\end{algorithm}

\begin{figure}[tb]
    \centering
    \includegraphics[width=0.9\linewidth]{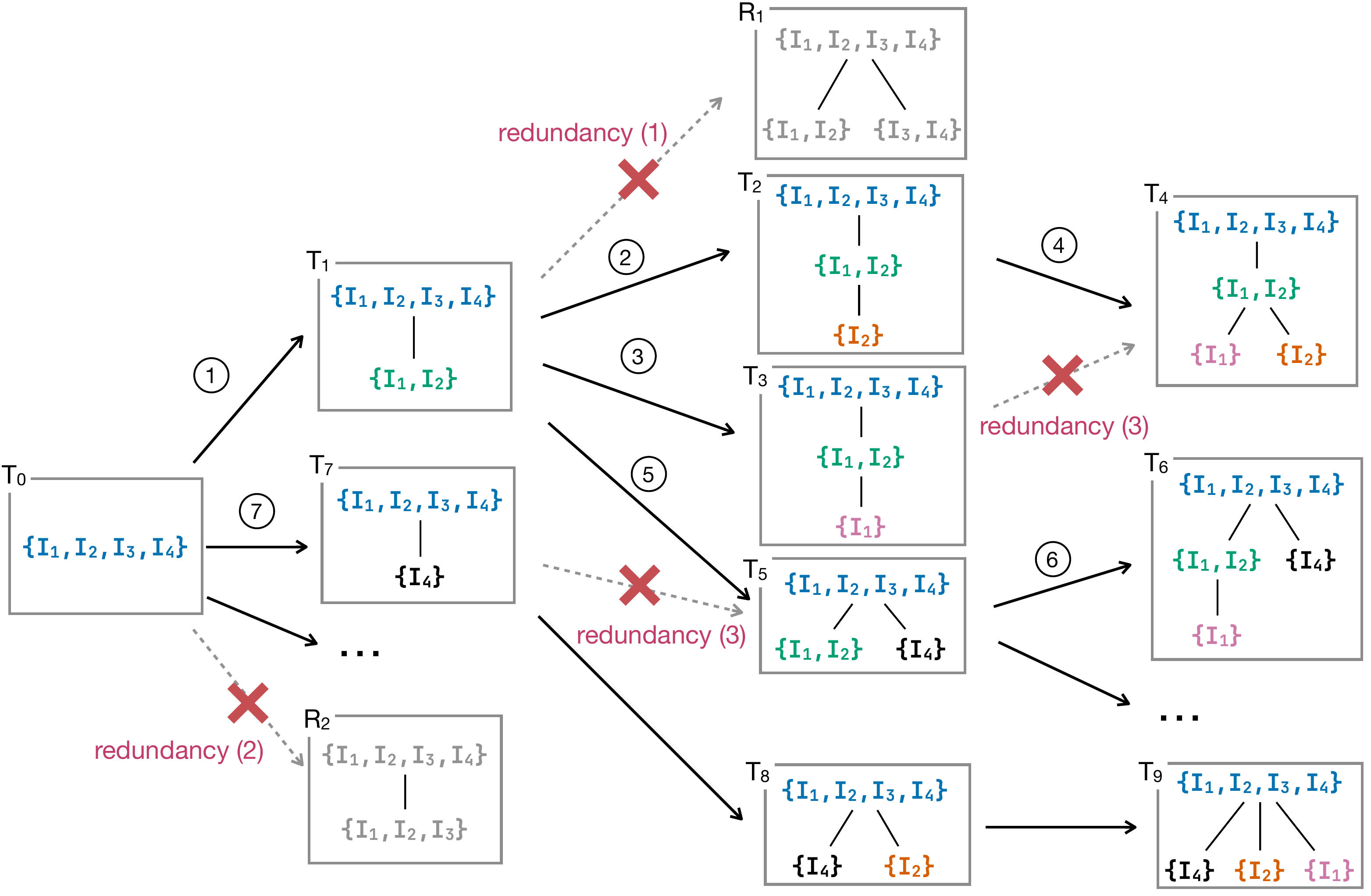}
    \caption{Enumeration of canonical dimension trees. Gray dashed edges and trees are redundant redundant enumeration paths and trees respectively, which are excluded from the search process. Three kinds of redundancy are pruned: (1) invalid dimension trees $R_1$ where the two sibling nodes $\{I_1, I_2\}$ and $\{I_3, I_4\}$ form a full partition of the indices; (2) non-canonical dimension trees $R_2$ where the node $\{I_1, I_2, I_3\}$ has a shorter complement node $\{I_4\}$, and therefore $\{I_1, I_2, I_3\}$ should never appear in a canonical dimension tree; (3) duplicate canonical dimension trees resulted from multiple enumeration paths ($T_3$ to $T_4$, $T_7$ to $T_5$). Redundancy (1) and (2) are ruled out by the definition of canonical dimension trees, while redundancy (3) is excluded by enforcing the order when adding nodes.}
    \label{fig:dim-trees:enum}
\end{figure}

\subsubsection{Dimension Tree Enumeration (Procedure \enumtrees)}\label{sec:algo:enumtrees}
The generation of CDTs is formalized in \cref{alg:enumtrees}. The algorithm takes a set of free indices $\inds$ and a parameter $S_\theta$ to control the size of output dimension trees.
At a high level, the procedure initializes a tree with a single root node $\inds$ (Line \ref{alg:enumtrees:init}) in accordance with dimension tree's definition rule \ref{def:dim-tree:nodes}, and subsequently expands the structure by recursively inserting child nodes (Line \ref{alg:enumtrees:grow}).

The core logic resides within the \textsc{GrowTree} function, which employs an iterator-style approach to yield valid canonical trees.
\textsc{GrowTree} first verifies the current tree size against the limit $S_\theta$ (Line \ref{alg:enumtree:size-limit}). If capacity remains, the algorithm iterates through all existing nodes in the tree (Line \ref{alg:enumtree:node-enum-start}) to identify potential expansion points. For a selected node, the algorithm identifies the set of indices $\inds_r$ not yet partitioned (Line \ref{alg:enumtree:remaining}) and evaluates all possible non-empty subsets $\inds_i \subset \inds_r$ to create new child nodes (Line \ref{alg:enumtree:subset-enum-start}).
To ensure efficiency and uniqueness, the algorithm applies three critical pruning filters:
\begin{enumerate}[label=(\arabic*)]
\item \textbf{Well-Formedness of Dimension Trees:} A new node $\inds_i$ is discarded if, when combined with an existing sibling node $\inds_j$, it fully partitions the index set $\inds$. This prevents the formation of invalid trees that violate the rule \ref{def:dim-tree:full-partition} in the definition of dimension trees (Line \ref{alg:enumtree:dimtree}).
\item \textbf{Canonical Form of Dimension Trees:} The algorithm enforces that the enumerated trees are canonical. If the complement set $\inds \setminus \inds_i$ is smaller than $\inds$ according to our defined set order, the node is skipped to ensure only the canonical form is explored (Line \ref{alg:enumtree:canonical}).
\item \textbf{Insertion Order:} To prevent isomorphic trees arising from different insertion sequences (e.g., adding $\{I_1\}$ then $\{I_2\}$ versus vice-versa), we enforce a strict descending order for node introduction (Line \ref{alg:enumtree:expansion-order}). A new node is only added if it is smaller than previously added nodes in the current tree.
\end{enumerate}
Once a candidate node $\inds_i$ passes these checks, it is integrated into the tree and yielded as a valid canonical structure (Line \ref{alg:enumtree:yield}). The process then continues recursively to explore deeper expansions (Line \ref{alg:enumtree:recursion}).

\Cref{fig:dim-trees:enum} illustrates this enumeration process for a four-dimensional tensor. Starting from the single node tree $T_0$, the algorithm introduces child nodes such as $\{I_1, I_2\}$ (step \textcircled{\scriptsize 1}), $\{I_4\}$ (step \textcircled{\scriptsize 7}), etc. to expand the structure. Among them, the transition from $T_0$ to $R_2$ is excluded because $R_2$ contains the node $\{I_1,I_2, I_3\}$, violating the canonical rule of CDTs as its complement set $\{I_4\}$ is smaller in our defined order.
Focusing on the valid path in step \textcircled{\scriptsize 1}, the algorithm introduces the child $\{I_1, I_2\}$. From this state, the tree can grow by attaching $\{I_2\}$ (step \textcircled{\scriptsize 2}) or $\{I_1\}$ (step \textcircled{\scriptsize 3}) to the leaf $\{I_1, I_2\}$, or by adding a sibling node $\{I_4\}$ to the root (step \textcircled{\scriptsize 5}). Notably, the algorithm prohibits attaching $\{I_3, I_4\}$ to the root (tree $R_1$) as it would violate the well-formedness rule of GDTs.
Furthermore, while steps \textcircled{\scriptsize 2} and \textcircled{\scriptsize 3} could eventually converge on the same structure $T_4$, only the path through step \textcircled{\scriptsize 4} is permitted; the transition from $T_3$ to $T_4$ is pruned by the insertion order constraint because $\{I_1\} < \{I_2\}$.

\subsection{Fast Scoring Via Constraint Solving}\label{sec:algo:score}
The CDTs generated via enumeration define an expansive search space of tensor network skeletons. While these structures establish the node connectivity (topology), the associated edge weights, specifically the rank assignments for each bond, remain undetermined. A na\"ive approach would involve executing a full tensor decomposition for every candidate tree $T_\inds$ to evaluate its performance. However, such a strategy is computationally prohibitive for large data tensors, as the cost of decomposition scales poorly regardless of the specific algorithm employed.

To bypass this bottleneck, we propose a constraint-solving technique designed to rapidly screen candidate structures and figure out the near-optimal rank assignment at the same time. This method identifies rank configurations with the highest compression potential, significantly reducing the number of full-scale decompositions required. The fundamental insight of this approach is that the singular value spectra of the original data tensor $\ten{X}$ can serve as a rigorous upper bound (over-approximation) for the singular values encountered during the actual hierarchical decomposition process.

In the following sections, we detail the mechanics of this scoring framework. We first describe the one-time singular value precomputation phase (Line \ref{alg:high-level:preprocess} of \cref{alg:high-level}), which populates the metadata map $\mapping$. We then demonstrate how these precomputed singular values are utilized in a fast cost computation routine (Line \ref{alg:high-level:cost}) to evaluate candidate skeletons without the need for intermediate tensor factorizations.

\begin{algorithm}[tb]
\caption{Precompute singular values for all matricizations of a given data tensor.}\label{alg:preprocess}
\begin{algorithmic}[1]
    \Require{A set of free indices $\inds$}
    \Ensure{Mapping $\mapping$ from index subsets to singular values}
    \Function{Preprocess}{$\ten{X}$}
        \State $\inds \gets \Call{Indices}{\ten{X}}$
        \State $\mapping \gets \emptyset$
        \For{$s = 1, 2, \ldots, \lfloor\frac{\texttt{len}(\inds)}{2}\rfloor$} \Comment{Enumerate all possible index subsets}\label{alg:preprocess:subset-start}
            \For{$\inds_s \in \Call{SubsetsOf}{\inds, s}$}
                \State $U; \Sigma; V \gets \operatorname{SVD}(\matric{\ten{X}}{\inds_s})$
                \label{alg:preprocess:svd-start}
                \State $\mapping[\inds_s] \gets \texttt{diag}(\Sigma)$
                \Comment{Store the singular values for the unfolding at $\inds_s$}
                \label{alg:preprocess:svd-end}
            \EndFor
        \EndFor\label{alg:preprocess:subset-end}
        \State \Return{$\mapping$}
    \EndFunction
\end{algorithmic}
\end{algorithm}

\subsubsection{Pre-computation of Singular Values (Procedure \textsc{Preprocess})}\label{sec:algo:preprocess}
At the initialization of \algoname (Line \ref{alg:high-level:preprocess}), we construct a metadata map $\mapping$ to store the singular values for all possible matricizations of the input data tensor. As detailed in \cref{alg:preprocess}, the tensor $\ten{X}$ is unfolded along every subset of its free indices $\inds$ (Line \ref{alg:preprocess:subset-start}). For each matricization $\matric{X}{\inds_s}$, we perform SVD to extract the associated singular values (Line \ref{alg:preprocess:svd-start}). These information are indexed by their corresponding index sets $\inds_s$ and archived in $\mapping$ (Line \ref{alg:preprocess:svd-end}).
While this preprocessing is the most computationally expensive phase of the algorithm with a complexity that grows exponentially with the number of dimensions, it is a one-time cost that serves as a high-speed look-up table for the assessment phase. Since the algorithm bypasses online decompositions during the evaluation of thousands of candidate skeletons, it remains highly efficient. For very high-dimensional tensors, this costly preprocessing step can be embedded within a hierarchical search framework to reduce the high computation complexity~\citep{guo2026hierarchical}.

\subsubsection{Constraint-Based Rank Optimization and Assessment (Procedure \textsc{GetCost})}\label{sec:algo:cost}
While \algoname can exhaustively enumerate a vast library of dimension trees, each representing a distinct tensor network skeleton, it is computationally prohibitive to perform explicit tensor decompositions to identify the best rank assignment for every candidate. Even with fixed ranks, the cost of iterative factorizations remains the primary bottleneck in the search process (\cref{sec:eval:ablation}). To overcome this, we treat rank assignment not as a search variable, but as a constrained optimization problem solved analytically for each dimension tree.

Given a data tensor $\ten{X}$, an error bound $\error$, and a dimension tree $T_\inds$ with $n+1$ nodes where $\inds_0 = \inds$ and $\inds_1, \inds_2, \ldots, \inds_n \subset \inds$, we formulate the task of \textsc{GetCost} as finding the rank assignment $\rankass = \{r_s\}_{s=1}^{n}$ that minimizes the dimension tree size while satisfying the error budget. Leveraging the precomputed singular value in $\mapping$, we model this as the following integer programming problem:
\begin{equation}
\label{eqn:constraint}
    \begin{aligned}
        \min_{r_1, r_2, \ldots, r_n} &\size{T_\inds \mid {\{r_s\}_{s=1}^{n}}} \quad 
        \text{s.t.} &\ \sum_{s=1}^{n}\sum_{i > r_s} \mapping^2 [\inds_s, i] \leq \left(\error\norm{\ten{X}}\right)^2
    \end{aligned}
\end{equation}

In this formulation, each $r_s$ is an integer variable for the edge rank between node $\inds_s$ and its parent, and $\mapping[\inds_s, i]$ represents the $i^{th}$ largest singular value for the $\inds_s$ unfolding of data tensor $\ten{X}$. The constraint ensures that the cumulative truncation error, which is equal to the sum of discarded singular values across all $n$ edges, remains within the threshold $\error$.
Once solved, the resulting rank assignment $\rankass$ is recorded and the corresponding dimension tree size serve as the cost metric used to rank the trees.

\begin{figure}
    \centering
    \includegraphics[width=0.45\linewidth]{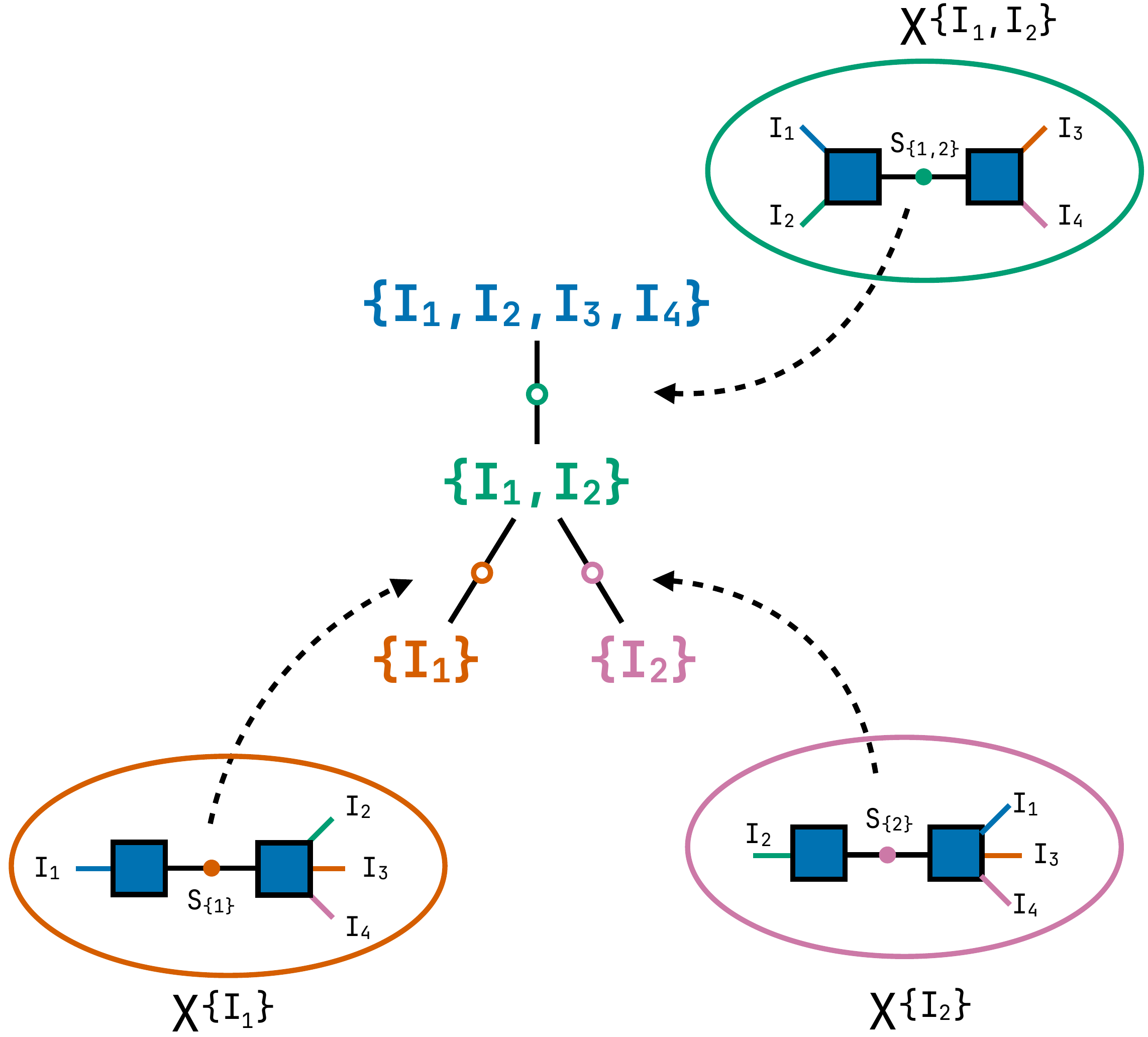}
    \caption{Approximation of singular values during decomposition using those of the input data tensor. Big circles denote the pre-computed SVD results over different matricizations of the input tensor. Solid small dots represent singular values for the data tensor, and unfilled small circles represent the location of what singular values we are approximating.}
    \label{fig:svd_approx}
\end{figure}

As illustrated in \cref{fig:svd_approx}, this approach approximate hierarchical tensor decompositions described in a dimension tree with a sequence of independent tensor decompositions applied to the original data tensor. For a dimension tree defined by the nested index sets $\{I_1, I_2, I_3, I_4\}$, $\{I_1, I_2\}$, $\{I_2\}$, and $\{I_1\}$, we approximate the singular values at each edge using the precomputed $\mapping[\{I_1, I_2\}]$, $\mapping[\{I_2\}]$, and $\mapping[\{I_1\}]$.
This methodology is theorectically supported by the following property:

\begin{theorem}[Singular Value Upper Bound]\label{thm:sv-bound}
Let $\ten{X} \in \real^{n_1 \times \cdots \times n_d}$ be a $d$-dimensional tensor, compressing $\ten{X}$ into a structure described by a dimension tree $T_\inds$ of $n+1$ nodes $\inds, \inds_1, \inds_2, \ldots, \inds_n$ produces the structure $\net$,
then for every $1 \leq i, s \leq n$, we have $\sv{j}{\recon{\net_i}\matric{}{\inds_s}} \leq \sv{j}{\matric{X}{\inds_s}}$ where $\sv{j}{A}$ is the $j^{th}$ largest singular value of the matrix $A$, and $\net_{i}$ is the network obtained after the first $i$ tensor decompositions specified by $T_\inds$.
\end{theorem}
\begin{proof}
By the definition of dimension trees, for every pair $1 \leq s < t \leq n$, there could only be three relations between $\inds_s$ and $\inds_t$: $\inds_s \subset \inds_t$, $\inds_t \subset \inds_s$, or $\inds_s \cap \inds_t = \emptyset$.

Suppose the network obtained after the $k^{th}$ tensor decomposition is denoted as $\net_k$. The network obtained after performing the tensor decomposition on $\net_k$ along index set $\inds_k$ is $\net_{k+1}$.
Performing the split defined above is equivalent to performing a truncated SVD on $\recon{\net_k}\matric{}{
\inds_k}$. Formally, we can say that if $\recon{\net_k}\matric{}{
\inds_k} = U \Sigma V$, then $\recon{\net_{k+1}}\matric{}{
\inds_k} = \widetilde{U} \widetilde{\Sigma} \widetilde{V}$, where $\widetilde{U}$, $\widetilde{\Sigma}$, and $\widetilde{V}$ are truncated matrices of $U$, $\Sigma$, and $V$. Consequently, using \cref{lemma:appendix:trunc}, we have that, for $\inds_t \subset \{I_1, \ldots, I_d\}$ such that $\inds_t \subseteq \inds_s, \inds_s \subseteq \inds_t$, or $\inds_s \cap \inds_t = \emptyset$, $\sv{i}{\recon{\net_{k+1}}} \leq \sv{i}{\recon{\net_k}}$ for all possible $i$ and $k$.

From the above result, we can conclude that $\sv{i}{\recon{\net_{k}}\matric{}{
\inds_t}} \leq \sv{i}{\recon{\net_0}\matric{}{\inds_t}} = \sv{i}{\matric{X}{
\inds_t}}$ for all valid choices of $\inds_t$ and all possible values of $i$ and $k$.
\end{proof}

\cref{thm:sv-bound} allows us to use the singular values of the original data as a rigorous over-approximation for any candidate CDT. While this may lead to a conservative estimate of the truncation error, we find that this upper bound is an exceptionally effective proxy for ranking. By solving for this bound, \algoname identifies high-performance CDTs without the prohibitive overhead of live tensor decompositions during search.

\begin{figure}[t]
    \centering
    \includegraphics[width=0.95\linewidth]{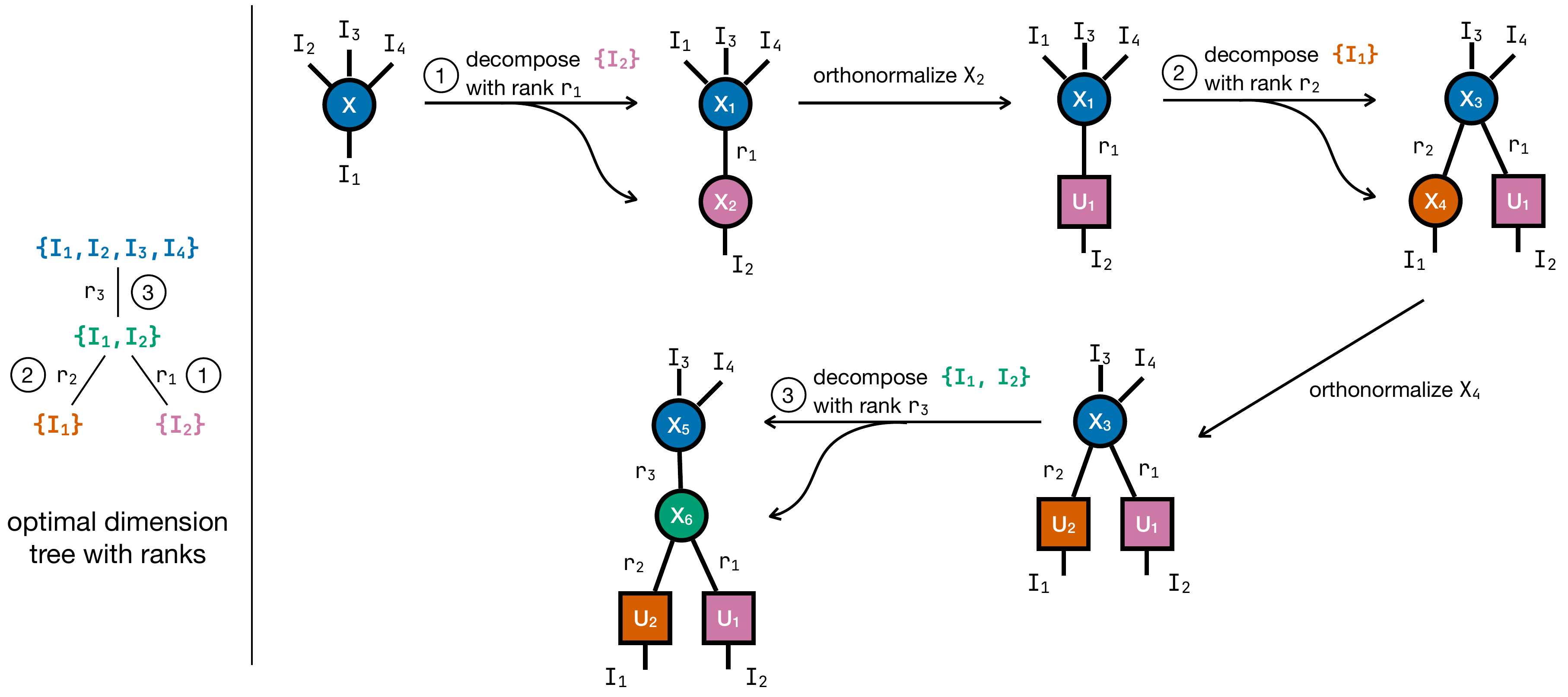}
    \caption{Decompose a data tensor into the format described by a generalized dimension tree. (left) A generalized dimension tree with rank assignments. (right) Step-by-step transformation from a data tensor to the tensor network represented by the dimension tree on the left. The tensor decomposition and network orthonormalization interleaves to ensure truncation error correctly propagates from individual nodes to the entire network. Square nodes stand for orthogonal nodes.}
    \label{fig:decompose}
\end{figure}

\subsection{Tensor Decomposition From Dimension Trees (Procedure \textsc{Decompose})}\label{sec:algo:decompose}
\rev{The final step of \algoname is the physical transformation of the data tensor into the tensor network structures dictated by the highest-ranked dimension trees (Lines \ref{alg:high-level:decomp-start}--\ref{alg:high-level:decomp-end} of \cref{alg:high-level}).} Unlike the assessment phase that operates on $\mapping$, the \textsc{Decompose} procedure executes a sequence of tensor decompositions, using the determined rank assignments $\rankass$ as the bond dimensions.

The hierarchical structure of the generalized dimension tree $T_\inds$ serves as a blueprint for this process. Each non-root node in $T_\inds$ prescribes a specific bipartition that incrementally shapes the network topology.
As illustrated in \cref{fig:decompose}, the process begins at a leaf representing an index set $\{I_2\}$, and isolates these indices via a $r_1$ truncated SVD along the bipartition $\bipar{\{I_2\}}{\inds \setminus \{I_2\}}$, splitting the data tensor into two factors (step \textcircled{\scriptsize 1}).

When moving to the leaf $\{I_1\}$, the algorithm identifies $\ten{X}_1$ as the target node.
To ensure the subsequent steps correctly compute the singular values, the algorithm must shift the orthogonality center between splits. Standard tensor network theory dictates that a local SVD only yields the true, global singular values of a bipartition if the operating node concentrates the system's full norm~\citep{white1992density,white1993density,evenbly2022practical}. Therefore, the previously created factor $\ten{X}_2$ is orthonormalized into the isometry $\ten{U}_1$, pushing the global norm into $\ten{X}_1$. Then, the second decomposition performed to separate index $I_1$ (step \textcircled{\scriptsize 2}).

For the non-leaf node $\{I_1, I_2\}$, the algorithm identifies the node in the current network that serves as the lowest common ancestor to both the $I_1$ and $I_2$ branches. In our example, this common ancestor is $\ten{X}_3$. After centering the orthogonality at $\ten{X}_3$ by orthonormalizing surrounding tensors, a further decomposition splits it into $\ten{X}_5$ and $\ten{X}_6$ to achieve the desired index partition (step \textcircled{\scriptsize 3}).

Following the completion of the hierarchical decomposition, the resulting network approximates the original data $\ten{X}$ within the relative error bound $\error$. However, as established by \cref{thm:sv-bound}, the singular values used during the constraint-solving phase are rigorous over-approximations derived from the original tensor. This implies that the solved ranks $\rankass$ may be more conservative than strictly necessary \rev{(the practical tightness and empirical accuracy of this approximation are validated by the close alignment demonstrated in \cref{tab:bunny_sizes})}.
To recover these potential gains, we apply a rounding procedure, an extension of the TT-rounding algorithm~\citep{Oseledets_2011} adapted for tree structures to exhaust the remaining error budget.
Notably, we employ the solved ranks $\rankass$ in the final decomposition rather than adopting an even error distribution across decomposition stages as common in TT or HT decomposition. In practice, we observe that this strategy yields improved compression ratios by utilizing the specific rank assignment identified by the solver.

\section{Complexity Analysis}
Given a tensor of $d$ dimensions and each dimension has size $n$, the preprocessing phase runs the SVD for all possible index partitions, which takes time $\bigo{n^{1.5d} 2^d}$. Then, the algorithm enumerates all dimension trees with at most $S_\theta$ nodes. The total number of dimension trees is $\bigo{2^{S_\theta d}}$. For each dimension tree, a constraint solving is run to compute the rank assignment, each taking time $\bigo{S_\theta^{2.5} 2^{S_\theta}}$. Lastly, we compress the data tensor for the best dimension tree, which involves at most $S_\theta$ truncated SVDs. Therefore, the total complexity is $\bigo{n^{1.5d}(S_\theta+2^d)+S_\theta^{2.5} 2^{S_\theta+S_\theta d}}$. \rev{This asymptotic complexity analysis demonstrates that preprocessing time scales exponentially with tensor order, a trend clearly reflected in our empirical benchmarks in \Cref{tab:real}. Therefore, we bound our evaluation of \algoname to data tensors of order $d \le 6$ and search spaces of $S_\theta \le 6$ nodes. Within this practical regime, the algorithm consistently delivers strong, highly competitive compression ratios. Extending this framework to support higher-dimensional tensors efficiently remains an active direction for future work.}
\section{Evaluation}\label{sec:eval}
In this section, we present empirical evaluation results, which aim to answer the following questions:
\begin{enumerate}[label=\textbf{(RQ\arabic*)},align=left,nosep]
    \item Can \algoname discover well-compressed structures for synthetic and real data?
    \item How does the performance of \algoname compare to prior work?
    \item How useful are CDT-based enumeration and constraint-based ranking in the search?
    \item Can the structures discovered by \algoname generalizable to unseen data?
\end{enumerate}

\paragraph{General Experiment Setup}
We run all the following experiments on a laptop with the Apple M3 Max CPU and $128$ GB memory.
In the experiments, we collect the running time and the compression ratio, which is defined as $\size{\ten{X}}\slash{\size{\recon{\net}}} = {\size{\ten{X}}}\slash{\sum_{v\in \nodes}\size{v}}$.
We choose the top $\topk$ structure for actual decomposition from all enumerated candidates.

\begin{figure}
    \centering
    \includegraphics[width=0.4\textwidth]{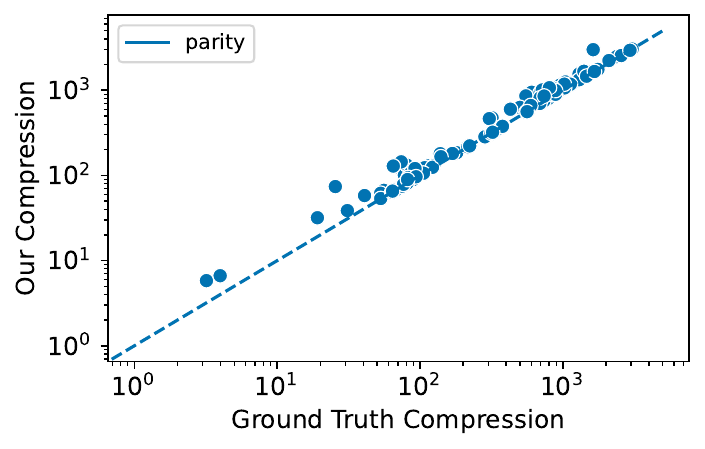}
    \caption{Comparison of compression ratios for random generated data. Ground truth compression is the compression ratio of the structure used to generate synthetic data.}\label{fig:eval:synthetic}
\end{figure}

\subsection{RQ1: Compression Performance of \algoname on Synthetic and Real Data}\label{sec:eval:synthetic}
\subsubsection{Performance on Synthetic Data}
\paragraph{Experiment Setup}
We evaluate \algoname on synthetic data by generating random tensors contracted from random tree structures.
This provides a basic test that \algoname can identify structures with compression ratios equal to or better than the generation structure.
Following prior work~\citep{zheng2024svdinstn}, we generate order-4 tensors with dimensions $16 \times 18 \times 20 \times 22$ and order-5 tensors 
with dimensions $14 \times 16 \times 18 \times 20 \times 22$.
Internal ranks are randomly sampled between $2$ and $5$.
For each shape, we sample $50$ structures with random tensor values, and contract them into data tensors.

\paragraph{Result Analysis}
\cref{fig:eval:synthetic} compares the compression ratios of generation structures and those discovered by \algoname.
The results indicate that \algoname achieves compression ratios that are equal to or greater than that of generation structures, for every data point.
In other words, \algoname identifies improved structures when the generation structure is suboptimal.

\subsubsection{Performance on Real Data}
\paragraph{Experiment Setup}
To evaluate the performance of \algoname on real-world data, we employ three datasets: light fields~\citep{zheng2024svdinstn}, BigEarthNet~\citep{sumbul2019bigearthnet}, and PDEBench~\citep{PDEBench,PDEBench2022}.
For the light field data, we use the bunny data with dimensions $40 \times 60 \times 3 \times 9 \times 9$, following prior work~\citep{zheng2024svdinstn}.
The BigEarthNet data is used to create tensors with dimensions $30 \times 12 \times 120 \times 120$ and $5 \times 20 \times 30 \times 12 \times 120 \times 120$ by randomly sampling and stacking $30$ and $3000$ data points respectively.
For the PDEBench data, we sample $10$ data points from the 3D compressible Navier-Stokes problem to create tensors of dimensions $10 \times 5 \times 21 \times 64 \times 64 \times 64$.

\begin{table}
    \centering
    \rev{
    \caption{Performance of \algoname on real datasets with error bounds $\error = 0.1$ and $\error = 0.01$. CR and RE represent the average compression ratio and reconstruction error; Mem is the peak memory usage in GB; the search time ($t_{tot}$, in seconds) is broken down into preprocessing ($t_{pre}$), constraint solving ($t_{cons}$), enumeration ($t_{enum}$), and final decomposition ($t_{decomp}$). Dataset dimensions: Light Field ($40 \times 60 \times 3 \times 9 \times 9$), BigEarthNet-Small ($30 \times 12 \times 120 \times 120$), BigEarthNet-Large ($5 \times 20 \times 30 \times 12 \times 120 \times 120$), and PDEBench ($10 \times 5 \times 21 \times 64 \times 64 \times 64$).}
    \label{tab:real}
    \begin{tabular}{l|r|rr|r|rrrr|r}
        \toprule
        \multirow{2}{*}{Data} & \multirow{2}{*}{$\error$} & \multirow{2}{*}{CR} & \multirow{2}{*}{RE} & \multicolumn{5}{c|}{Search Time (s)} & \multirow{2}{*}{Mem (GB)} \\
        \cmidrule{5-9}
        & & & & $t_{tot}$ & $t_{pre}$ & $t_{cons}$ & $t_{enum}$ & $t_{decomp}$ & \\
        \midrule
        \multirow{2}{*}{Light field}        & $0.1$  &  68.81 & 0.099 &   17.49 &    0.00 &   17.29 &  0.15 &    0.04 & \multirow{2}{*}{ 0.43} \\
                                            & $0.01$ &   7.04 & 0.010 &    3.08 &    0.01 &    2.83 &  0.16 &    0.08 & \\
        \midrule
        \multirow{2}{*}{BigEarthNet-Small}  & $0.1$  & 160.01 & 0.100 &    3.35 &    1.18 &    1.71 &  0.02 &    0.45 & \multirow{2}{*}{ 0.73} \\
                                            & $0.01$ &   2.67 & 0.010 &    8.82 &    3.18 &    1.38 &  0.06 &    4.20 & \\
        \midrule
        \multirow{2}{*}{BigEarthNet-Large}  & $0.1$  & 151.79 & 0.100 & 4096.89 & 3898.37 &  124.16 &  5.33 &   69.00 & \multirow{2}{*}{26.75} \\
                                            & $0.01$ &   3.07 & 0.010 & 4640.90 & 3894.94 &   56.44 &  5.29 &  684.21 & \\
        \midrule
        \multirow{2}{*}{PDEBench}           & $0.1$  &  38.75 & 0.100 & 2232.25 & 1817.40 &  341.03 &  3.71 &   70.09 & \multirow{2}{*}{16.77} \\
                                            & $0.01$ &   1.81 & 0.010 & 2654.70 & 1830.92 &  110.66 &  3.66 &  709.45 & \\
        \bottomrule
    \end{tabular}
    }
\end{table}

\paragraph{Result Analysis}
We collect average compression ratios, search time in seconds, reconstruction errors, and peak memory usage for each dataset in \cref{tab:real}.
The results demonstrate that \algoname achieves significant compression across diverse real-world datasets while strictly adhering to the prescribed error bounds. 
For an error tolerance of $\error = 0.1$, the algorithm yields high compression ratios, notably above 150 for the BigEarthNet datasets. As the error bound tightens to $\error = 0.01$, the compression naturally decreases, ranging from 1.81 to 7.04.
The stable reconstruction errors across all benchmarks confirm that precomputed singular values provides an upper bound for real singular values:
\algoname uses precomputed singular values to search for rank assignments, and none of the numerical experiments witness the exceeding of the prescribed error bounds.
\rev{Peak memory consumption increases with the tensor sizes but remains well within practical hardware limits. For the largest BigEarthNet-Large dataset, which has a disk footprint of approximately 2 GB, the algorithm's peak memory usage reaches 26.75 GB.}

\rev{The overall search time ($t_{tot}$) scales with both the size and the order of data tensors. For low-dimensional, small data, the search time is usually below 20 second. For large data tensors, the search time rises up to around 5,000 second. An analysis of the time breakdown reveals that the computational bottleneck is heavily scale-dependent. For large-scale tensors, the runtime is overwhelmingly dominated by the precomputation of singular values ($t_{pre}$). This is mathematically expected, as the cost of SVDs increases with tensor size and the number of SVDs grows exponentially with tensor order. For small-scale tensors, constraint solving ($t_{cons}$) represents the primary computational bottleneck. Notably, constraint solving requires longer execution time under looser error bounds, as a large error tolerance expands the feasible search space and increases the number of valid rank assignment configurations.}

\begin{figure}[t]
    \centering
    \begin{minipage}{.25\linewidth}
    \includegraphics[width=\linewidth]{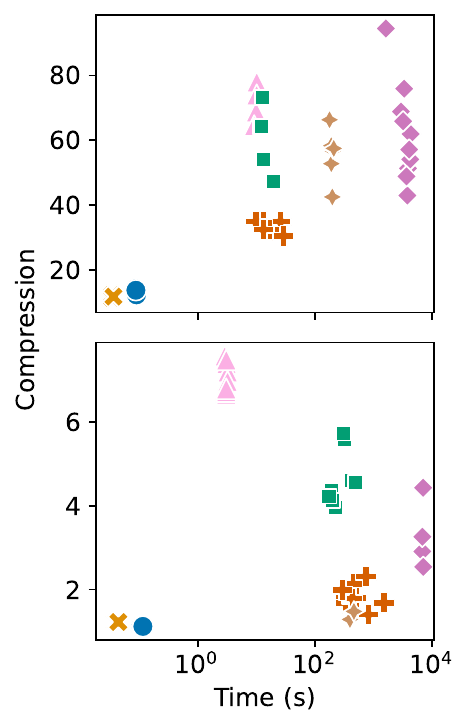}
    \end{minipage}
    \begin{minipage}{.23\linewidth}
    \includegraphics[width=\linewidth]{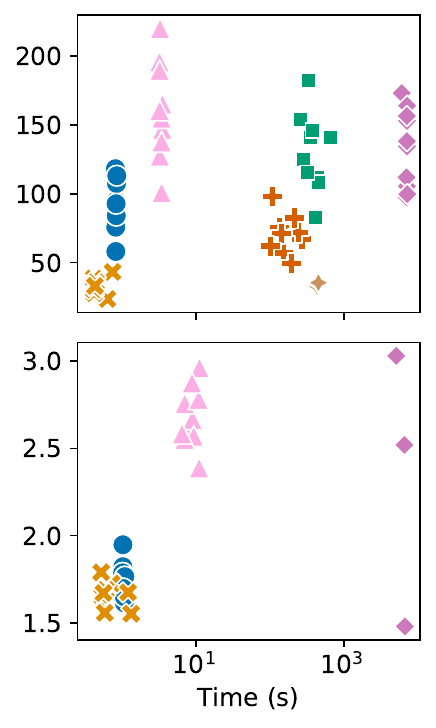}
    \end{minipage}
    \begin{minipage}{.23\linewidth}
    \includegraphics[width=\linewidth]{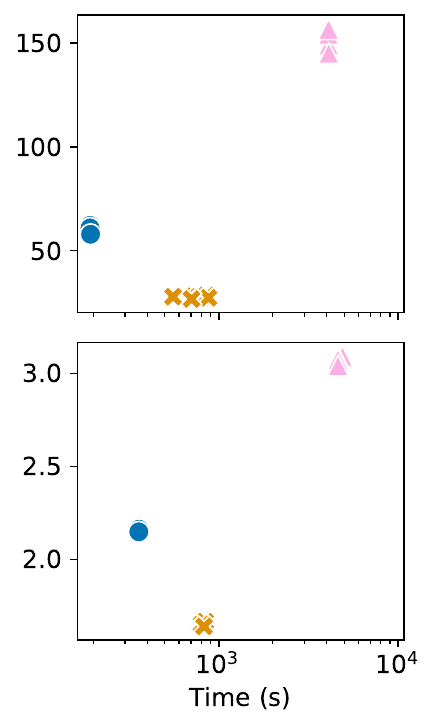}
    \end{minipage}
    \begin{minipage}{.23\linewidth}
    \includegraphics[width=\linewidth]{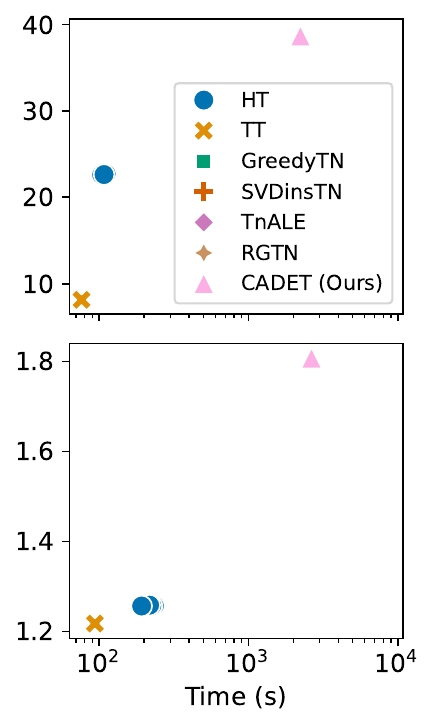}
    \end{minipage}
    \rev{
    \caption{Comparison of compression ratio vs time on real datasets. The datasets from left to right are light field data ($40 \times 60 \times 3 \times 9 \times 9$), BigEarthNet-Small ($30 \times 12 \times 120 \times 120$), BigEarthNet-Large ($5 \times 20 \times 30 \times 12 \times 120 \times 120$), and PDEBench ($10 \times 5 \times 21 \times 64 \times 64 \times 64$). The two rows corresponds to error bounds of $0.1$ and $0.01$ respectively.
    }\label{fig:eval:real}
    }
\end{figure}

\begin{figure}[t]
    \centering
    \rev{
    \includegraphics[width=.55\linewidth]{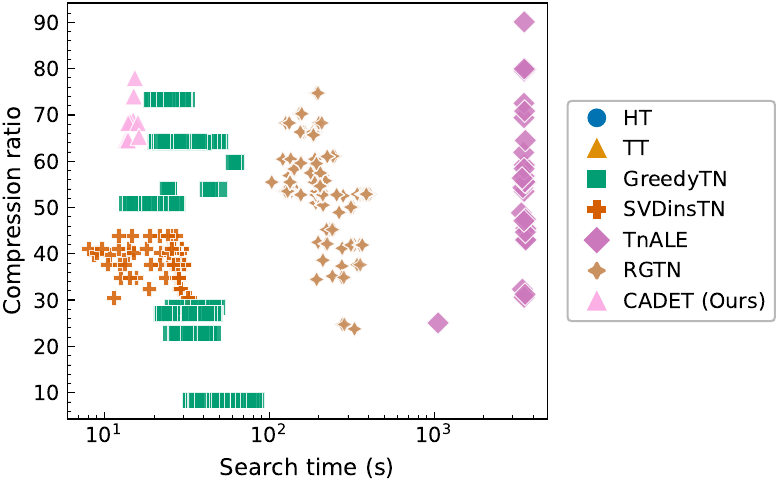}
    \caption{Compression Ratio vs. Runtime on light field data ($40 \times 60 \times 3 \times 9 \times 9$) with $\error=0.1$ and a timeout limit 1 hour. Each data point of the baselines represents the structure search result for a data tensor with a specific hyperparameter configuration. The proposed method (\algoname) achieves competitive compression while remaining orders of magnitude faster than high-performance baselines like TnALE.
    }\label{fig:eval:sweep}}
\end{figure}

\subsection{RQ2: Compression Performance Comparison Between \algoname and Existing Tools}\label{sec:eval:real}
\paragraph{Experiment Setup}
We compare our method in terms of search time and compression ratio against tensor trains (TT), binary hierarchical tuckers (HT), and four baselines:
\begin{enumerate*}[label=(\arabic*)]
    \item GreedyTN~\citep{hashemizadeh2020adaptive}: a greedy algorithm that gradually increases internal ranks;
    \item TnALE~\citep{Li_Zeng_Li_Caiafa_Zhao_2023}: a sampling-based method that reduces sample sizes by local search;
    \item SVDinsTN~\citep{zheng2024svdinstn}: an optimization-based approach that integrates rank and topology search.
    \item \rev{RGTN~\citep{wang2026renormalization}: a multi-scale optimization approach that searches rank assignments and topology together.}
\end{enumerate*}
For our algorithm, we search dimension trees of up to $6$ nodes and pick the top one for the tensor decomposition.
The timeout limit is 3 hours.

\paragraph{Result Analysis}
The results are shown in \cref{fig:eval:real}. Compared to TT and HT, our algorithm finds more compressed structures, albeit requiring roughly an order of magnitude more time across all datasets. This is expected, as our approach explores general tree structures and allows reordering free indices, whereas TT and HT are limited to special cases within our broader search space.

In comparison to prior work, our method not only achieves comparable or superior compression ratios, especially on large tensors, but is also \rev{up to} $10\times$ faster than baselines. While TnALE occasionally discovers structures with higher compression ratios by finding cyclic structures, its dependence on a large sample size leads to significant slowdowns across tests. Moreover, TnALE often fails to converge to the desired error bound within the timeout. \rev{TNGreedy, SVDinsTN, and RGTN exhibit performance similar to ours on smaller tensors and larger error bounds, but neither scales well as tensor size increases.} They do not produce valid results within the 3-hour timeout for BigEarthNet-Small data with $\error = 0.01$ or other larger tensors with either error bound. \rev{To ensure these timeouts were not a consequence of suboptimal settings, we conducted extensive hyperparameter optimization for both SVDinsTN and RGTN, carefully tuning their configurations to strike the best possible balance between compression performance and computational efficiency.}

To verify that the performance gains of \algoname are consistent across different baseline parameterizations, we conducted an extensive parameter sweep on the light field dataset with $\error=0.1$ and timeout limit of 1 hour.
We tested a broad range of configurations for each baseline: rank boundaries from $[10,15,20]$ for TnALE; optimization steps from $[100,200,300]$, rank increments from $[1,2,4]$, and ALS iterations from $[100,150,200]$ for GreedyTN; the regularization parameters $\gamma \in [0.05,0.1,0.5]$, $\mu \in [0.1,0.5,1.0]$, and $\rho \in [10^{-4}, 10^{-5}, 10^{-6}]$ for SVDinsTN; \rev{the regularization parameters $\lambda_{re} \in [1.0, 2.0, 4.0]$, the expansion steps from $[15,20,25]$, the compression steps from $[15,20,25]$, and the initial dimension sizes from $[2,4,8]$}. These parameter choices are centered around the manually tuned settings used in the previous experiment. Due to the significant runtime of the baselines, this exhaustive sweep was restricted to the light field data as a representative benchmark.

As shown in \cref{fig:eval:sweep}, the resulting performance trends mirror our previous evaluation in \cref{fig:eval:real} (top left). \algoname provides a superior balance of efficiency and compression performance, identifying high-quality structures with small search overhead. While TnALE can reach higher absolute compression ratios, largely due to its support for cyclic topologies, it requires orders of magnitude more search time and frequently hits the one-hour timeout. \rev{GreedyTN and RGTN exhibit significant performance variance across different parameter settings and data tensors, but no configuration reaches the compression performance of our method.} SVDinsTN is more stable, but it frequently converges to suboptimal structures or fails to satisfy the error bound.
In contrast, our approach \algoname identifies high-quality tensor network structures without the need for the extensive hyperparameter tuning that limits the practical utility of existing methods.

\rev{These results confirm that our method outperforms baselines in terms of search time while maintaining matching or better compression ratios.}
Our method of deferring decomposition until the search space is narrowed down effectively reduces the search time.
However, the lack of support for cycles limits our performance on compression ratio in some cases, and we plan to address this in future work.

\begin{table}[t]
    \small
    \centering
    \caption{Comparison of compression ratio (CR) and runtime on Light field data ($60 \times 40 \times 3 \times 9 \times 9$), and BigEarthNet-Small ($30\times 12 \times 120 \times 120$) data across different variants of our algorithm. The variants differ based on the dimension tree enumeration method (canonical or non-canonical) and the rank search method (constraint-based or equal error distribution). Canonical + Constraint is ours. Results are shown for error bounds $\error=0.1$ and $\error=0.01$.}
    \label{tab:ablation}
    \begin{tabular}{c|cc|rr|rr}
        \toprule
        \multirow{2}{*}{Dataset} & \multicolumn{2}{c|}{Variant} & \multicolumn{2}{c|}{$\error = 0.1$} & \multicolumn{2}{c}{$\error = 0.01$} \\
        \cmidrule{2-7}
        & Enum & Rank & CR & Time (s) & CR & Time (s) \\
        \midrule
        \multirow{3}{*}{Light field} & Canonical & Constraint & 68.81 & \textbf{17.49} & 7.03 & \textbf{3.62}\\
        & Non-canonical & Constraint & 68.34 & 4069.6 & 7.03 & 1635.2\\
        & Canonical & Equal & 69.18 & 28.73 & 7.03 & 44.32\\
        \midrule
        \multirow{3}{*}{BigEarthNet-Small} & Canonical & Constraint & 155.71 & \textbf{4.67} &2.69 & \textbf{8.07}\\
        & Non-Caonical & Constraint & $154.64$ & $1248.06$ & $2.66$ & $419.65$\\
        & Canonical & Equal & $147.51$ & $15.04$ & $2.62$ & $68.97$\\
        \bottomrule
    \end{tabular}
\end{table}

\subsection{RQ3: Ablation Studies}\label{sec:eval:ablation}
%
\subsubsection{Dimension Tree Enumeration w/ vs w/o Canonicalization}
We evaluate the effectiveness of canonical dimension tree enumeration by comparing its performance to general tree enumeration (removing the three checks on lines \ref{alg:enumtree:dimtree}--\ref{alg:enumtree:expansion-order} of \cref{alg:enumtrees}).
We set the maximum number of nodes in dimension trees to $6$.
We compare compression ratios and running time of the two settings on light field data with dimensions $40\times 60 \times 3 \times 9 \times 9$, and BigEarthNet-Small data with dimensions $30\times 12 \times 120 \times 120$.
Ablation experiments on larger tensors did not produce results within $12$ hours and are therefore omitted.
\cref{tab:ablation} shows that both methods achieve similar compression ratios, but canonical dimension tree enumeration significantly accelerates the search process. \rev{For BigEarthNet-Small, allowing canonical checks results in only 63 CDTs whereas removing canonical checks leads to 35727 trees.}

\begin{table}[t]
\centering
\rev{
    \caption{Comparison of tensor network sizes and compression ratios on the light field dataset ($\error = 0.1$) across three structures: the score from constraint solving, the final decomposition results achieved by \algoname, and the best results from exhaustive rank enumeration. The estimated sizes closely approximate the final decomposition and approach the best size obtained from exhaustive enumeration. The Ranking in \algoname column reports where the exhaustive-enumeration structure falls within \algoname's own cost-sorted ranking, showing it is consistently ranked near the top. $(\downarrow)$ and $(\uparrow)$ means lower and higher is better respectively.}
\label{tab:bunny_sizes}
\begin{tabular}{c|rrrr|rrr}
\toprule
& \multicolumn{4}{c|}{\algoname} & \multicolumn{3}{c}{Rank Enumeration}\\
\cmidrule(lr){2-5} \cmidrule(lr){6-8}
\multirow{2}{*}{Dataset} & \multicolumn{2}{c}{Scoring} & \multicolumn{2}{c|}{Final} & \multirow{2}{*}{Size ($\downarrow$)} & \multirow{2}{*}{CR ($\uparrow$)} & Ranking by\\
\cmidrule(lr){2-3} \cmidrule(lr){4-5}
& Size ($\downarrow$) & CR ($\uparrow$) & Size ($\downarrow$) & CR ($\uparrow$) & & & \algoname ($\downarrow$) \\
\midrule
\multirow{10}{*}{Light Field} & 10,007 & 58.28 & 9,038 & 64.53 & 7,240 & 80.55 & 11 \\
 & 8,914 & 65.43 & 7,478 & 77.99 & 6,718 & 86.81 & 14 \\
 & 10,007 & 58.28 & 9,038 & 64.53 & 7,448 & 78.30 & 13 \\
 & 9,038 & 64.53 & 8,478 & 68.79 & 6,747 & 86.44 & 13 \\
 & 9,727 & 59.96 & 8,518 & 68.47 & 7,182 & 81.20 & 9 \\
 & 9,065 & 64.34 & 7,878 & 74.03 & 6,798 & 85.79 & 13 \\
 & 10,007 & 58.28 & 9,038 & 64.53 & 7,448 & 78.30 & 6 \\
 & 9,514 & 61.30 & 8,545 & 68.25 & 7,240 & 80.55 & 12 \\
 & 9,514 & 61.30 & 8,545 & 68.25 & 6,990 & 83.43 & 9 \\
 & 10,094 & 57.78 & 8,934 & 65.28 & 7,456 & 78.22 & 8 \\
\midrule
Mean & 9,589 & 60.95 & 8,549 & 68.46 & 7,127 & 81.96 & 10.8 \\
\bottomrule
\end{tabular}
}
\end{table}

\begin{figure}
\centering
    \includegraphics[width=0.5\linewidth]{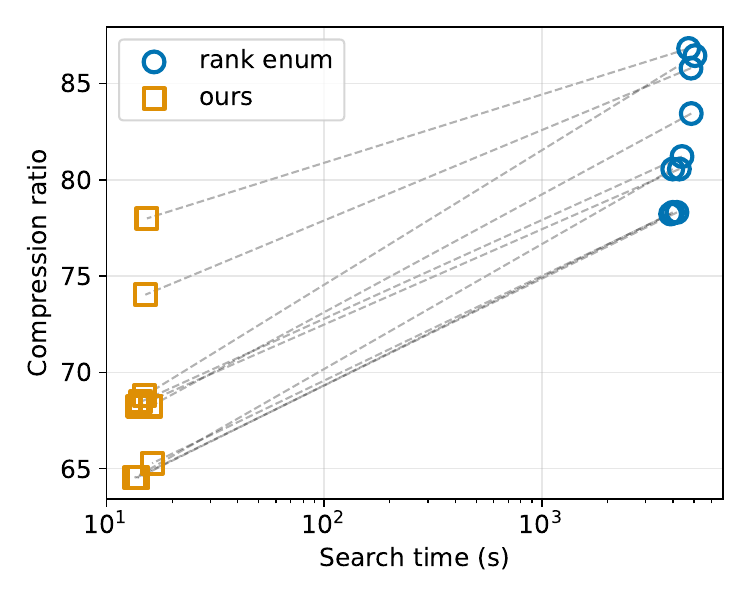}
    \label{fig:bunny-rank-enum:pairwise}
    \caption{Performance comparison between constraint-based and rank-enumeration-based scoring for light field data. The proposed constraint-based method reduces search time by approximately two orders of magnitude at the cost of a moderate reduction in compression ratio (CR), which remains within 80-85\% of the rank-enumeration method. Dotted lines indicate corresponding data points between two methods.}
    \label{fig:bunny-rank-enum}
\end{figure}

\subsubsection{w/ vs w/o Constraint-Based Rank Search}
To evaluate the effectiveness of constraint-based rank search, we compare it against a variants that enumerates all possible rank assignments for a tree.
\Cref{fig:bunny-rank-enum} presents the changes of compression ratios and search time for the $10$ light field data tensors at $\error=0.1$. It shows that our proposed method reduces the search time by approximately two orders of magnitude while maintaining high compression ratios, typically reaching 80--90\% of the values found by the exhaustive rank-enumeration method.

\rev{The quality of constraint solving results is further detailed in \Cref{tab:bunny_sizes}, where we compare the estimated network size through constraint solving, the network size obtained in final decomposition, and the network size achieved from enumerative rank search. Across the 10 light field data tensors, the structural size estimated via constraint solving closely mirrors the final decomposition. On average, the constraint solver yields an estimated tensor network size of 9,589 (CR = 60.95), which the rounding procedure in \algoname refines to a final decomposition size of 8,549 (CR = 68.46). This final output closely approaches the optimum obtained via exhaustive rank enumeration, which yields a mean size of 7,127 (CR = 81.96). To quantify the gap between our constraint-based scoring and exhaustive enumeration, we evaluate where the structure selected by exhaustive rank enumeration falls within \algoname's own cost-sorted candidate ranking. Notably, this optimal tree structure consistently ranks among the top candidates, with a mean rank of 10.8 (ranging from 6 to 14). These findings demonstrate that our constraint-based rank search effectively bypasses the combinatorial explosion of exhaustive search while still identifying highly competitive rank assignments.}

However, it is unaffordable to run exhaustive rank enumeration on light field data with $\error=0.01$ or BigEarthNet data with either error tolerance, we develop a cheaper variant that distributes the error budget equally between steps, which is the strategy adopted by HT and TT.
\rev{As shown in \Cref{tab:ablation}, this equal-error variant exhibits longer execution times than our constraint-solving approach. Although it occasionally yields slightly superior compression ratios due to its execution of online tensor decomposition, \algoname relies on approximated singular values, which can lead to more conservative rank estimations but significantly faster runtimes. Such reduction of computation time is especially evident for small error bounds.}

\begin{table}[t]
\small
\centering
    \caption{Performance of network search on the training batch, \rev{the discovered best tree structure} on the test batches, and hierarchical Tuckers (HT) on the test batch with respect to compression ratios and time.}
    \label{tab:generalization}
    \begin{tabular}{c|rr|rr}
        \toprule
        & \multicolumn{2}{c|}{BigEarthNet-Small}  & \multicolumn{2}{c}{PDEBench} \\
        \cmidrule{2-5}
        & CR & Time (s) & CR & Time (s)\\
        \midrule
        Train & $157.12$ & $4591.64$  & $38.75$ & $2141.85$ \\
        Test &$148.98$ & $109.61$ & $38.58$ & $115.38$ \\
        HT  & $60.71$ & $190.98$ & $22.65$ & $95.21$ \\
        \bottomrule
    \end{tabular}
\end{table}

\subsection{RQ4: Generalization Analysis}\label{sec:eval:general}
\subsubsection{Generalization to Unseen Data}
\paragraph{Experiment Setup} We conduct a generalization test to evaluate if the structure discovered by our algorithm generalizes to unseen data.
To begin, each dataset is divided into equal-sized batches.
The first batch serves as the training batch, where both topology and rank search are performed using the proposed algorithm.
For subsequent test batches, we apply the topology from the training batch, and run the constraint-based rank search to assess generalization.
In the BigEarthNet dataset, data is divided into $89$ batches of tensors with shape $5 \times 20 \times 30 \times 12 \times 120 \times 120$, each containing $3000$ samples.
In the PDEBench dataset, the data is divided into $60$ batches of $10$ samples each, producing tensors of shape $10 \times 5 \times 21 \times 64 \times 64 \times 64$.
The error bound used for this experiment is $\error = 0.1$.

\paragraph{Result Analysis}
\cref{tab:generalization} presents the compression ratios and running time for the training batch and the averages of test batches.
The results show that the compression ratios of test batches are close to that of the training batch, which proves that the discovered topology generalizes well to unseen data.
Particularly, although the training batch requires thousands of seconds to find \rev{the optimal tree structure within the search space}, it takes only about $110$ seconds on average to perform the decomposition for each test batch.
This result reinforces the feasibility of running the training batch once to determine \rev{the best tree topology} and reusing it for subsequent batches, which significantly reduces the overall computation time.




\section{Conclusion}
In this work, we introduced \algoname, a framework for tensor network structure search based on canonical dimension trees. By representing potential architectures as hierarchical index partitions, we successfully decouple topology enumeration from rank optimization, which significantly reduces the complexity of the search space. Our methodology replaces the traditional, computationally expensive candidate assessment methods with an efficient constraint-solving mechanism. By precomputing a metadata map of singular values, \rev{we solve for rank assignments that are empirically close to the configurations found by exhaustive enumeration, and evaluate candidate quality without performing live tensor decompositions for every structure.}

Our empirical results on both synthetic and real-world datasets demonstrate that the proposed approach consistently discovers well-compressed structures, achieves \rev{competitive compression ratios compared to existing tools}, and scales to larger tensors where prior methods fail. 
Moreover, the structures identified by our method generalize to unseen data from the same source, enabling efficient reuse in practical applications.

\bibliography{main}
\bibliographystyle{tmlr}

\newpage
\appendix
\section{Proofs}\label{sec:appendix:proof}
\subsection{Proof of Singular Value Approximations}
\begin{definition}[Subtensor]
Suppose $\ten{X} \in \real^{n_1 \times n_2 \times \ldots n_d}$ is a $d$-dimensional tensor. Its subtensor $\ten{Y}  \in \real^{m_1 \times m_2 \times \ldots m_d}$ (written as $\ten{Y} \sqsubseteq \ten{X}$) is also a $d$-dimensional tensor obtained by restricting the index sets corresponding to dimension $\mu$ to $m_\mu$ elements where for all $\mu \in \{1,2,\ldots, d\}$, $m_\mu \leq n_\mu$. If $\ten{Y}$ is a subtensor of $\ten{X}$, there exists a set of binary matrices with orthonormal rows $\pi_1, \dots\pi_d$ such that $\ten{Y} = (\pi_1 \otimes \pi_2 \otimes \cdots \otimes \pi_d) \ten{X}$.
\end{definition}

\begin{lemma}[Subtensors Preservation on Permutation]\label{lemma:appendix:subtensor-permute}
Let $\ten{X} \in \real^{n_1 \times n_2 \times \ldots \times n_d}$ be a $d$-dimensional tensor, $\ten{Y}  \in \real^{m_1 \times m_2 \times \cdots \times m_d}$, and $\ten{Y} \sqsubseteq \ten{X}$.
If $\Pi \in \{1,\ldots, d\} \rightarrow \{1,\ldots, d\}$ is a permutation of the dimensions,
then $\mathtt{permute}(\ten{Y}, \Pi(1), \ldots, \Pi(d)) \sqsubseteq \mathtt{permute}(\ten{X}, \Pi(1), \ldots, \Pi(d))$.
\end{lemma}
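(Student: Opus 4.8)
The plan is to lean on the algebraic characterization of subtensors given in the definition: $\ten{S} \sqsubseteq \ten{T}$ exactly when there is a family of binary ``selection'' matrices with orthonormal rows $\pi_1,\ldots,\pi_d$, with $\pi_\mu \in \real^{m_\mu \times n_\mu}$, such that $\ten{S} = (\pi_1 \otimes \pi_2 \otimes \cdots \otimes \pi_d)\ten{T}$. With this in hand the lemma reduces to a single bookkeeping identity describing how $\mathtt{permute}$ interacts with a mode-wise linear action, together with the observation that a reindexed family of selection matrices is still a family of selection matrices.

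First I would fix a convention for $\mathtt{permute}(\cdot,\Pi)$ and expand both objects in coordinates: the entry of $(\pi_1\otimes\cdots\otimes\pi_d)\ten{T}$ at position $(i_1,\ldots,i_d)$ is $\sum_{j_1,\ldots,j_d}\big(\prod_{\mu}(\pi_\mu)_{i_\mu j_\mu}\big)\,\ten{T}_{j_1,\ldots,j_d}$. Applying $\mathtt{permute}(\cdot,\Pi)$ just relabels the output axes; after correspondingly renaming the dummy summation indices one reads off the key identity
\[
\mathtt{permute}\big((\pi_1\otimes\cdots\otimes\pi_d)\,\ten{T},\;\Pi\big)
=(\pi_{\Pi(1)}\otimes\cdots\otimes\pi_{\Pi(d)})\;\mathtt{permute}(\ten{T},\Pi).
\]
This is a routine index computation once the permutation convention is pinned down, and it is the heart of the argument.

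Next I would check that the permuted family $\pi_{\Pi(1)},\ldots,\pi_{\Pi(d)}$ is again a valid family of binary matrices with orthonormal rows — it is literally the same set of matrices in a different order — and that the shapes are consistent: mode $k$ of $\mathtt{permute}(\ten{S},\Pi)$ has size $m_{\Pi(k)}$, mode $k$ of $\mathtt{permute}(\ten{T},\Pi)$ has size $n_{\Pi(k)}$, and $\pi_{\Pi(k)}\in\real^{m_{\Pi(k)}\times n_{\Pi(k)}}$ fits between them. Combining this with the displayed identity and applying the definition of subtensor in the reverse direction gives $\mathtt{permute}(\ten{S},\Pi)\sqsubseteq\mathtt{permute}(\ten{T},\Pi)$, as desired.

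I expect the only genuine obstacle to be notational rather than mathematical: one must use the $\mathtt{permute}$ convention consistently on both sides so that the factors get reindexed by $\Pi$ and not by $\Pi^{-1}$, which requires a little care with the direction of index renaming. As a safety net I would keep in mind an even more elementary route: a subtensor amounts to restricting each mode $\mu$ to a subset $A_\mu\subseteq\{1,\ldots,n_\mu\}$, and permuting modes merely relabels which physical mode carries which restriction, so $\mathtt{permute}(\ten{S},\Pi)$ restricts mode $k$ of $\mathtt{permute}(\ten{T},\Pi)$ to $A_{\Pi(k)}$, still a valid subset, hence is a subtensor. I would present the selection-matrix version as the main proof since that characterization is the one used downstream (e.g.\ in the singular-value bound), and optionally remark on the elementary version.
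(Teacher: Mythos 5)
Your proposal is correct and follows essentially the same route as the paper's own proof: both invoke the selection-matrix characterization $\ten{S} = (\pi_1 \otimes \cdots \otimes \pi_d)\ten{T}$ and commute the mode-wise projections past $\mathtt{permute}$ via the identity $\mathtt{permute}((\pi_1\otimes\cdots\otimes\pi_d)\ten{T},\Pi) = (\pi_{\Pi(1)}\otimes\cdots\otimes\pi_{\Pi(d)})\,\mathtt{permute}(\ten{T},\Pi)$. Your version is if anything more careful than the paper's, since you explicitly verify the shape consistency and flag the $\Pi$ versus $\Pi^{-1}$ convention issue that the paper glosses over.
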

\begin{proof}
As $\ten{Y} \sqsubseteq \ten{X}$, there exists a list of binary matrices with orthonormal rows $[\pi_\mu]_{1\leq \mu \leq d}$ such that $\ten{Y} = (\pi_1 \otimes \pi_2 \otimes \cdots \otimes \pi_d) \ten{X}$.
Then, we can see that
\begin{equation}
\begin{aligned}
    &\mathtt{permute}\left(\ten{Y}, \Pi(1), \Pi(2), \ldots, \Pi(d)\right)\\
  =\quad &\mathtt{permute}\left((\pi_{1} \otimes \pi_{2} \otimes \cdots \otimes \pi_{d})\ten{X}, \Pi(1), \Pi(2), \ldots, \Pi(d))\right)
\end{aligned}
\end{equation}
Since each $\pi_\mu$ only modifies values along the corresponding mode $\mu$, and permutation only moves dimensions without altering values, we can move the projection outside the $\mathtt{permute}$ by reordering the projections according to the dimension permutation, and get 
\begin{equation}
\begin{aligned}
&\mathtt{permute}\left(\left(\pi_{1} \otimes \pi_{2} \otimes \cdots \otimes \pi_{d})\ten{X}, \Pi(1), \Pi(2), \ldots, \Pi(d)\right)\right) \\
=\quad &(\pi_{\Pi(1)} \otimes \pi_{\Pi(2)} \otimes \cdots \otimes \pi_{\Pi(d)})\ \mathtt{permute}\left(\ten{X}, \Pi(1), \Pi(2), \ldots, \Pi(d)\right)
\end{aligned}
\end{equation}
Therefore, $\mathtt{permute}(\ten{Y}, \Pi(1), \Pi(2), \ldots, \Pi(d)) \sqsubseteq \mathtt{permute}(\ten{X}, \Pi(1), \Pi(2), \ldots, \Pi(d))$ holds.
\end{proof}

\begin{lemma}[Subtensors Preservation on Reshape]\label{lemma:appendix:subtensor-reshape}
Let $\ten{X} \in \real^{n_1 \times n_2 \times \cdots n_d}$ be a $d$-dimensional tensor with indices $I_1, I_2, \ldots, I_d$, and $\ten{Y} \in \real^{m_1 \times m_2 \times \ldots m_d}$ be a $d$-dimensional tensor with indices $J_1, J_2, \ldots, J_d$.
If $\ten{Y} \sqsubseteq \ten{X}$, then for any $s \subset \{1,2,\ldots, d\}$, we have $\matric{X}{\inds_s} \sqsubseteq \matric{Y}{\mathcal{J}_s}$, and vice versa.
\end{lemma}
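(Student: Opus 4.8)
The plan is to reduce the statement to the special case in which the modes of $s$ occupy the leading positions, and then to verify that a pure reshape that groups a prefix block of modes together and the complementary suffix block together carries the subtensor relation through in both directions.

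First I would invoke \cref{lemma:appendix:subtensor-permute}. Both $\matric{T}{\ind_s}$ and $\matric{S}{\mathcal{J}_s}$ are obtained by first permuting the modes so that those in $s$ come to the front and then calling $\mathtt{reshape}$. Since $\mathtt{permute}$ preserves $\sqsubseteq$ — with the witnessing list of projections $\pi_1,\dots,\pi_d$ merely reordered — it is enough to prove the claim when $s=\{1,\dots,k\}$ and the matricization is exactly $\mathtt{reshape}(\cdot,\ \prod_{\mu\le k} m_\mu,\ \prod_{\mu>k} m_\mu)$ (respectively with the $n_\mu$'s for $\ten{T}$).

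For the key step, write $\ten{S}=(\pi_1\otimes\cdots\otimes\pi_d)\ten{T}$ with each $\pi_\mu$ binary with orthonormal rows, as guaranteed by the definition of subtensor. Using the standard identification of a tensor with its matricization — i.e. the $\mathrm{vec}$-type identity $\mathrm{vec}(AXB^{\top})=(B\otimes A)\mathrm{vec}(X)$ applied to the prefix/suffix blocks — the same mode-wise projections act on the matricization as a single projection on the rows and a single projection on the columns: $\matric{S}{\mathcal{J}_s}=(\pi_1\otimes\cdots\otimes\pi_k)\,\matric{T}{\ind_s}\,(\pi_{k+1}\otimes\cdots\otimes\pi_d)^{\top}$. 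It then remains to observe that a Kronecker product of binary matrices with orthonormal rows is again binary (its entries are products of $\{0,1\}$ entries) and again has orthonormal rows (a Kronecker product of matrices with orthonormal rows does), so both $\pi_1\otimes\cdots\otimes\pi_k$ and $\pi_{k+1}\otimes\cdots\otimes\pi_d$ are admissible projections; this is precisely $\matric{S}{\mathcal{J}_s}\sqsubseteq\matric{T}{\ind_s}$. The converse ("and vice versa") follows by running the argument backward: the reshape in the matricization is a bijection between $\real^{n_1\times\cdots\times n_d}$ and $\real^{(n_1\cdots n_k)\times(n_{k+1}\cdots n_d)}$, and projections witnessing the submatrix relation on the matricizations have the required Kronecker block form (because in each mode the index range of $\ten{S}$ is a sub-range of that of $\ten{T}$), so un-reshaping and un-permuting them recovers a list $\pi_1,\dots,\pi_d$ witnessing $\ten{S}\sqsubseteq\ten{T}$.

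The main obstacle I expect is bookkeeping rather than mathematics: one must pin down the linearization order of the multi-indices $(i_1,\dots,i_k)$ and $(i_{k+1},\dots,i_d)$ into the row and column indices and make it consistent with the order of the factors in the Kronecker products — this is exactly where a row-major versus column-major convention would flip $A\otimes B$ into $B\otimes A$ — and one must state cleanly why Kronecker products preserve the "binary with orthonormal rows" property. Both points are routine, but they are the only places the argument could go wrong.
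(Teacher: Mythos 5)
Your proposal is correct and follows essentially the same route as the paper's proof: reduce to the leading-modes case via \cref{lemma:appendix:subtensor-permute}, then regroup the mode-wise projections $\pi_1\otimes\cdots\otimes\pi_d$ into a row-block factor $\bigotimes_{\mu\in s}\pi_\mu$ and a column-block factor $\bigotimes_{\mu\notin s}\pi_\mu$ acting on the matricization, and reverse the two steps for the converse. You are somewhat more explicit than the paper about the bookkeeping (the $\mathrm{vec}$-type identity and the closure of ``binary with orthonormal rows'' under Kronecker products), but the underlying argument is identical.
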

\begin{proof}
Since $\ten{Y} \sqsubseteq \ten{X}$, there exists binary matrices with orthonormal rows $[\pi_\mu]_{\mu \in \{1,2,\ldots, d\}}$ such that 
\begin{equation}
\ten{Y} = \left(\pi_{1} \otimes \pi_{2} \otimes \cdots \otimes \pi_{d}\right)\ten{X}
\end{equation}
For a set of indices $\inds_s$, by \cref{lemma:appendix:subtensor-permute} and the associativity of tensor product, we get that
\begin{equation}
\matric{Y}{\mathcal{J}_s} = \left(\bigotimes_{\mu \in s} \pi_\mu \otimes \bigotimes_{\mu \not\in s}\pi_\mu\right)\matric{X}{\inds_s}
\end{equation}
Therefore, we have that $\matric{Y}{\mathcal{J}_s} \sqsubseteq \matric{X}{\mathcal{I}_s}$.

Similarly, the property can be proved for the reverse direction by applying the above two steps in the reverse order.
\end{proof}

\begin{lemma}[Singular Value Upper Bound in Subtensors]\label{lemma:appendix:subtensor}
Let $\ten{X} \in \real^{n_1 \times n_2 \times \cdots \times n_d}$ be a $d$-dimensional tensor with indices $I_1, I_2, \ldots, I_d$, $\ten{Y}  \in \real^{m_1 \times m_2 \times \ldots \times m_d}$ be a $d$-dimensional tensor with indices $J_1, J_2, \ldots, J_d$, and $\ten{Y} \sqsubseteq \ten{X}$. Define $\sigma_i(A)$ to be the $i^{th}$ largest singular value of a matrix $A$. Then, for all $s \subset \{1, 2, \ldots, d\}$, if $\inds_s = \{I_i\}_{i \in s}$ and $\mathcal{J}_s = \{J_i\}_{i \in s}$, we have $\sigma_i(\matric{Y}{\mathcal{J}_s}) \leq \sigma_i(\matric{X}{\inds_s})$.
\end{lemma}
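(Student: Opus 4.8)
The plan is to prove \cref{lemma:appendix:subtensor} by reducing the tensor statement to a statement about matrices, and then invoking a standard linear-algebra fact about how singular values behave under left- and right-multiplication by matrices with orthonormal rows. First I would use \cref{lemma:appendix:subtensor-reshape}: since $\ten{S} \sqsubseteq \ten{T}$, for the given subset $s$ we get $\matric{S}{\mathcal{J}_s} \sqsubseteq \matric{T}{\ind_s}$, and unpacking the definition of subtensor at the matrix level this means there exist binary matrices with orthonormal rows $P$ and $Q$ (obtained as Kronecker products $\bigotimes_{\mu \in s}\pi_\mu$ and $\bigotimes_{\mu \notin s}\pi_\mu$ respectively, which are themselves binary with orthonormal rows) such that $\matric{S}{\mathcal{J}_s} = P\, \matric{T}{\ind_s}\, Q^\top$.

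Next I would establish the core matrix inequality: if $M$ is any matrix and $P, Q$ have orthonormal rows (so $PP^\top = I$, $QQ^\top = I$), then $\sigma_i(PMQ^\top) \leq \sigma_i(M)$ for every $i$ (indexing from the smallest). The cleanest route is to note that multiplying on either side by a matrix with orthonormal rows is a composition of (a) an orthogonal transformation, which leaves all singular values invariant, and (b) a coordinate projection / selection of a submatrix, which can only decrease singular values. For the submatrix step I would cite the interlacing / Cauchy-type inequality: for any submatrix $N$ of $M$ obtained by deleting rows and/or columns, $\sigma_i(N) \leq \sigma_i(M)$ — equivalently, extracting a submatrix is left- and right-multiplication by $0$–$1$ selection matrices with orthonormal rows, and $\|PMQ^\top\|_2 \le \|M\|_2$ together with the Courant–Fischer min–max characterization of $\sigma_i$ gives the bound for all indices. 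Concretely, writing $\sigma_i$ (the $i$-th smallest of, say, $\min(p,q)$ values of a $p\times q$ matrix) via $\sigma_i(M) = \max_{\dim V = i}\ \min_{0\ne x\in V}\ \|Mx\|/\|x\|$ over subspaces $V$, I would push the optimal subspace for $PMQ^\top$ through $Q^\top$ and use $\|Pz\|\le\|z\|$ to conclude.

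Then I would combine the two pieces: apply the matrix inequality with $M = \matric{T}{\ind_s}$, $P = \bigotimes_{\mu\in s}\pi_\mu$, $Q = \bigotimes_{\mu\notin s}\pi_\mu$ to obtain $\sigma_i(\matric{S}{\mathcal{J}_s}) = \sigma_i(P\,\matric{T}{\ind_s}\,Q^\top) \leq \sigma_i(\matric{T}{\ind_s})$, which is exactly the claim. A minor bookkeeping point I would address is consistency of the singular-value indexing when the two matrices have different dimensions: since $\matric{S}{\mathcal{J}_s}$ has no more rows and no more columns than $\matric{T}{\ind_s}$, the $i$-th smallest singular value of the smaller matrix is well matched to (and bounded by) that of the larger one, possibly after padding the shorter singular-value list with zeros, which only helps the inequality.

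The main obstacle I anticipate is not any deep difficulty but getting the index conventions exactly right: the paper orders singular values from smallest to largest and uses $\sigma_i$ for the $i$-th smallest, so I must make sure the min–max argument (which is most natural for the $i$-th largest) is transcribed correctly, and that the "submatrix has smaller singular values" step is stated for the smallest-first ordering. Everything else — that Kronecker products of binary orthonormal-row matrices are again binary with orthonormal rows, and that orthogonal multiplication preserves singular values — is routine and can be stated without detailed calculation.
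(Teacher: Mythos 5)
Your proposal is correct and takes essentially the same route as the paper: reduce to the matrix level via \cref{lemma:appendix:subtensor-reshape}, so that $\matric{S}{\mathcal{J}_s}$ is a two-sided compression $P\,\matric{T}{\ind_s}\,Q^\top$ by binary orthonormal-row matrices, and then invoke singular-value interlacing for such compressions — the paper packages that second step as the Poincar\'e separation theorem applied to $\matric{T}{\ind_s}\matric{T}{\ind_s}^{*}$, which is exactly the Courant--Fischer argument you spell out. Your explicit caveat about zero-padding the shorter singular-value list is in fact needed for the ``$i$th smallest'' indexing to be meaningful when the two matricizations have different numbers of singular values; the paper leaves this implicit.
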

\begin{proof}
From $\ten{Y} \sqsubseteq \ten{X}$ and \cref{lemma:appendix:subtensor-permute}, we know that $\matric{Y}{\mathcal{J}_s}) \sqsubseteq \matric{X}{\inds_s}$. Then, the result can be obtained by applying the Poincar\'e separation theorem to $\matric{X}{\inds_s}\big(\matric{X}{\inds_s}\big)^{*}$.
\end{proof}

\begin{lemma}[Singular Value Upper Bound in Truncations]\label{lemma:appendix:trunc}
Let $\ten{X} \in \real^{n_1 \times n_2 \times \cdots \times n_d}$ be a $d$-dimensional tensor with indices $I_1, I_2, \ldots, I_d$.
Let $\inds_s \subset \{I_1, I_2, \ldots, I_d\}$ be a set of indices.
Suppose $\textsc{SVD}(\matric{X}{\inds_s}) = U \Sigma V$. After truncation to some rank $r$, we get $\matric{\widetilde{X}}{\inds_s} = \widetilde{U}\widetilde{\Sigma}\widetilde{V}$ where $\widetilde{U} = U[:,:r]$, $\widetilde{\Sigma} = \Sigma[:r, :r]$, and $\widetilde{V} = V[:r]$.
Then we have that, for all $\inds_t \subset \{I_1, I_2, \ldots, I_d\}$, if $\inds_t \subseteq \inds_s, \inds_s \subseteq \inds_t$, or $\inds_s \cap \inds_t = \emptyset$, then  $\sigma_i(\matric{\widetilde{T}}{\inds_t}) \leq \sigma_i(\matric{X}{\inds_t})$.
\end{lemma}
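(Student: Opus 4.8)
The plan is to recast a truncation as a one‑sided projection of the $\ind_s$‑matricization and then to split into cases according to how $\ind_t$ meets $\ind_s$. Write $\textsc{SVD}(\matric{T}{\ind_s})=U\Sigma V$ with $VV^{*}=I$, and let $K$ be the set of singular directions kept by the truncation, so that, by the lemma's convention, $\widetilde{\ten{T}}_{\langle\ind_s\rangle}=\widetilde{U}\widetilde{\Sigma}\widetilde{V}=P\matric{T}{\ind_s}$ with $P=\widetilde{U}\widetilde{U}^{*}$ the orthogonal projection onto the span of the kept left singular vectors; equivalently, $\widetilde{\ten{T}}$ is $\ten{T}$ with $P$ contracted along its combined $\ind_s$‑modes. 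Two reductions then cut the number of cases. First, $\widetilde{\ten{T}}_{\langle\ind_s^{c}\rangle}=\widetilde{V}^{*}\widetilde{\Sigma}\widetilde{U}^{*}$ is itself a rank‑$|K|$ truncation of $\matric{T}{\ind_s^{c}}=(\matric{T}{\ind_s})^{*}$, so $\widetilde{\ten{T}}$ is simultaneously the truncation taken at the complementary partition $\ind_s^{c}$. Second, a matricization and its complement are transposes and hence share singular values. Using these, the case $\ind_s\subseteq\ind_t$ reduces to the core case $\ind_t\subseteq\ind_s$ applied to the pair $(\ind_s^{c},\ind_t^{c})$ followed by a transpose, and the case $\ind_s\cap\ind_t=\emptyset$ reduces to the core case applied to $(\ind_s^{c},\ind_t)$, since then $\ind_t\subseteq\ind_s^{c}$.

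For the core case $\ind_t\subseteq\ind_s$, I would set $\ind_b=\ind_s\setminus\ind_t$ and $\ind_c=\{I_1,\dots,I_d\}\setminus\ind_s$, index the rows of $\matric{T}{\ind_s}$ by pairs $(i_t,i_b)$, and reshape $U$ into the tensor $\ten{U}$ with $\ten{U}_{i_t,i_b,l}=U_{(i_t,i_b),l}$. Expanding both $\matric{T}{\ind_t}$ and $\widetilde{\ten{T}}_{\langle\ind_t\rangle}$ through the SVD and using $VV^{*}=I$ to collapse the sum over the $\ind_c$‑index, I expect
\begin{align*}
\matric{T}{\ind_t}\matric{T}{\ind_t}^{*} &= \sum_{i_b}\sum_{l}\sigma_l^{2}\,\mathbf{u}_{i_b,l}\mathbf{u}_{i_b,l}^{*}, & \widetilde{\ten{T}}_{\langle\ind_t\rangle}\widetilde{\ten{T}}_{\langle\ind_t\rangle}^{*} &= \sum_{i_b}\sum_{l\in K}\sigma_l^{2}\,\mathbf{u}_{i_b,l}\mathbf{u}_{i_b,l}^{*},
\end{align*}
where $\mathbf{u}_{i_b,l}=(\ten{U}_{i_t,i_b,l})_{i_t}$ is the column vector indexed by $i_t$ and $\sigma_l$ are the singular values of $\matric{T}{\ind_s}$. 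The difference of the two Gram matrices is therefore $\sum_{i_b}\sum_{l\notin K}\sigma_l^{2}\,\mathbf{u}_{i_b,l}\mathbf{u}_{i_b,l}^{*}$, a sum of positive semidefinite rank‑one terms, so $\widetilde{\ten{T}}_{\langle\ind_t\rangle}\widetilde{\ten{T}}_{\langle\ind_t\rangle}^{*}\preceq\matric{T}{\ind_t}\matric{T}{\ind_t}^{*}$. By Weyl's monotonicity theorem, $\sigma_j(\widetilde{\ten{T}}_{\langle\ind_t\rangle})^{2}\le\sigma_j(\matric{T}{\ind_t})^{2}$ for every $j$; since the two matrices have equal dimensions, this is exactly the claimed inequality in the smallest‑first indexing used in the statement.

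For the two reduced cases one may alternatively argue directly, without passing through $\ind_s^{c}$: there $\widetilde{\ten{T}}_{\langle\ind_t\rangle}$ equals $\matric{T}{\ind_t}$ left‑multiplied, after a harmless permutation of rows, by $P\otimes I$, and the bound then follows from the standard inequality $\sigma_j(BA)\le\|B\|_{2}\,\sigma_j(A)$ together with $\|P\otimes I\|_{2}=\|P\|_{2}\le 1$. Either route completes the proof.

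The step I expect to be the main obstacle is the core case, specifically the recognition that although the truncation is a plain left‑projection when viewed in the $\ind_s$‑matricization, it does \emph{not} remain one‑sided after re‑matricizing at a strictly smaller block $\ind_t\subsetneq\ind_s$: a naive ``$\sigma_j(PA)\le\sigma_j(A)$'' no longer applies, and in fact the projected Gram $P G G^{*} P$ need not be $\preceq G G^{*}$ once $P$ acts on both sides. The resolution is to expand through the SVD so that truncation becomes deletion of the indices $l\notin K$, which makes the Gram difference a transparent sum of positive semidefinite rank‑one terms. The only other delicate point is the matricization bookkeeping when the modes of $\ind_t$ and of $\ind_s\setminus\ind_t$ are interleaved inside $\ind_s$, which is harmless because permuting the rows of a matricization does not change its singular values, and I would dispatch it with a single sentence.
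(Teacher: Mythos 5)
Your proof is correct, and it rests on the same underlying mechanism as the paper's---truncation deletes a set of singular directions, and after re-matricizing at $\ind_t$ the deleted directions contribute only a positive semidefinite term to the Gram matrix---but the packaging is genuinely different. The paper factors the argument through a chain of subtensor lemmas: $\widetilde{U}\widetilde{\Sigma}$ is a subtensor of $U\Sigma$ (columns deleted), subtensor-ness is preserved under permutation and reshape, and singular values of subtensors are dominated via the Poincar\'e separation theorem; the three cases are split according to whether $\ind_t$ sits inside the left factor $U\Sigma$ or the right factor $\Sigma V$. You instead compute $\matric{T}{\ind_t}\matric{T}{\ind_t}^{*}$ explicitly through the SVD expansion, use $VV^{*}=I$ to collapse the complementary index, exhibit the Loewner ordering directly, and finish with Weyl monotonicity; the remaining cases are dispatched by observing that the truncation at $\ind_s$ is simultaneously a truncation at the complement $\ind_s^{c}$, combined with the transpose symmetry of complementary matricizations. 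Your route is more self-contained---it avoids the subtensor formalism entirely---and it makes explicit the point the paper leaves implicit, namely that the projection is no longer one-sided after re-matricizing at a strictly smaller block, so a naive $\sigma_j(PA)\le\sigma_j(A)$ argument is unavailable there. The only blemish is cosmetic: in your alternative direct argument, for the disjoint case the projection acts on the \emph{columns} of $\matric{T}{\ind_t}$ rather than the rows, so one should invoke $\sigma_j(AB)\le\sigma_j(A)\lVert B\rVert_2$ instead; your primary route through the complement does not depend on this.
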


\begin{proof}
In this proof, matrices $U$ and $V$ can be treated as $(n_s + 1)$ and $(d - n_s + 1)$ dimensional tensors where $n_s = |\inds_s|$. The same considerations are made for $\widetilde{U}$ and $\widetilde{V}$. Due to the tree structure, we consider the following three cases of relations between $\inds_t$ and $\inds_s$.

\begin{itemize}
    \item Case I: $\inds_t = \inds_s$. The singular values are discarded without other modification, so $\sigma_i(\matric{\widetilde{X}}{\inds_t}) = \sigma_i(\matric{X}{\inds_t})$.
    
    \item Case II: $\inds_t \subset \inds_s$. 
    By the definition of matricization and SVD, we know that $\inds_s \subset \textsc{Indices}(\widetilde{U})$. Hence, $\inds_t \subset \textsc{Indices}(\widetilde{U})$.
    By \cref{lemma:appendix:subtensor-reshape}, \cref{lemma:appendix:subtensor}, and $\widetilde{U}\widetilde{\Sigma} \sqsubseteq U \Sigma$, we know that 
    \begin{equation}
    \sv{i}{\left(\widetilde{U}\widetilde{\Sigma}\right)\matric{}{\inds_t}} \leq \sv{i}{\left(U \Sigma\right)\matric{}{\inds_t}}
    \end{equation}
    Therefore,
    \begin{equation}
    \sv{i}{\matric{\widetilde{X}}{\inds_t}} = \sv{i}{\left(\widetilde{U}\widetilde{\Sigma}\right)\matric{}{\inds_t}} \leq \sv{i}{\left(U\Sigma\right)\matric{}{\inds_t}} = \sv{i}{\matric{X}{\inds_t}}
    \end{equation}
    
    \item Case III: $\inds_s \subset \inds_t$ or $\inds_t \cap \inds_s = \emptyset$. By the definition of matricization and SVD, we know that $\inds_s \subset \textsc{Indices}(\widetilde{U})$. Hence, $\inds_t \subset \textsc{Indices}(\widetilde{\ten{X}})\ \backslash\ \textsc{Indices}(\widetilde{U}) \subset \textsc{Indices}(\widetilde{V})$. By \cref{lemma:appendix:subtensor-reshape}, \cref{lemma:appendix:subtensor}, and $\widetilde{\Sigma}\widetilde{V} \sqsubseteq \Sigma V$, we know that 
    \begin{equation}
        \sv{i}{(\widetilde{\Sigma}\widetilde{V})\matric{}{\inds_t}} \leq \sv{i}{\left(\Sigma V\right)\matric{}{\inds_t}}
    \end{equation}
    Therefore,
    \begin{equation}
    \sv{i}{\matric{\widetilde{X}}{\inds_t}} = \sv{i}{\left(\widetilde{\Sigma}\widetilde{V}\right)\matric{}{\inds_t}} \leq \sv{i}{\left(\Sigma V\right)\matric{}{\inds_t}} = \sv{i}{\matric{X}{\inds_t}}
    \end{equation}
    
\end{itemize}

To summarize, we have $\sv{i}{\matric{\widetilde{X}}{\inds_t}} \leq \sv{i}{\matric{X}{\inds_t}}$ for all possible choices of $\inds_t$.
\end{proof}

\begin{theorem}[Singular Value Upper Bound]\label{thm:appendix:sv-bound}
Let $\ten{X} \in \real^{n_1 \times \cdots \times n_d}$ be a $d$-dimensional tensor, compressing $\ten{X}$ into a structure described by a dimension tree $T_\inds$ of $n+1$ nodes $\inds, \inds_1, \inds_2, \ldots, \inds_n$ produces the structure $\net$,
then for every $1 \leq i, s \leq n$, we have $\sv{j}{\recon{\net_i}\matric{}{\inds_s}} \leq \sv{j}{\matric{X}{\inds_s}}$ where $\sv{j}{A}$ is the $j^{th}$ largest singular value of the matrix $A$, and $\net_{i}$ is the network obtained after the first $i$ tensor decompositions specified by $T_\inds$.
\end{theorem}
\begin{proof}
By the definition of dimension trees, for every pair $1 \leq s < t \leq n$, there could only be three relations between $\inds_s$ and $\inds_t$: $\inds_s \subset \inds_t$, $\inds_t \subset \inds_s$, or $\inds_s \cap \inds_t = \emptyset$.

Suppose the network obtained after the $k^{th}$ tensor decomposition is denoted as $\net_k$. The network obtained after performing the tensor decomposition on $\net_k$ along index set $\inds_k$ is $\net_{k+1}$.
Performing the split defined above is equivalent to performing a truncated SVD on $\recon{\net_k}\matric{}{
\inds_k}$. Formally, we can say that if $\recon{\net_k}\matric{}{
\inds_k} = U \Sigma V$, then $\recon{\net_{k+1}}\matric{}{
\inds_k} = \widetilde{U} \widetilde{\Sigma} \widetilde{V}$, where $\widetilde{U}$, $\widetilde{\Sigma}$, and $\widetilde{V}$ are truncated matrices of $U$, $\Sigma$, and $V$. Consequently, using \cref{lemma:appendix:trunc}, we have that, for $\inds_t \subset \{I_1, \ldots, I_d\}$ such that $\inds_t \subseteq \inds_s, \inds_s \subseteq \inds_t$, or $\inds_s \cap \inds_t = \emptyset$, $\sv{i}{\recon{\net_{k+1}}} \leq \sv{i}{\recon{\net_k}}$ for all possible $i$ and $k$.

From the above result, we can conclude that $\sv{i}{\recon{\net_{k}}\matric{}{
\inds_t}} \leq \sv{i}{\recon{\net_0}\matric{}{\inds_t}} = \sv{i}{\matric{X}{
\inds_t}}$ for all valid choices of $\inds_t$ and all possible values of $i$ and $k$.
\end{proof}

\begin{theorem}[Upper Bound of Costs]\label{thm:appendix:cost-bound}
For a data tensor $\ten{X}$ with indices $\inds$, a dimension tree $T_\inds$ of $n+1$ nodes $\inds_0 = \inds, \inds_1, \inds_2, \ldots, \inds_n \subset \inds$, and an error bound $\error$, let $\sigma_{si}$ be the $i^{th}$ largest singular value of $\matric{X}{\inds_s}$, and $\varsigma_{si}$ be the $i^{th}$ largest singular value of $\matric{\hat{X}}{\inds_s}$ where $\ten{\hat{X}}$ is obtained after performing truncated SVD over $\ten{X}$. If $r_1, \ldots, r_n$ is a solution to the constraint solving with singular values $\sigma_{si}$ for all $\inds_s \subset \inds$, then $r_1, \ldots, r_n$ is also a solution to the constraint solving with singular values $\varsigma_{si} \leq \sigma_{si}$.
\end{theorem}
\begin{proof}
It is easy to see that
$$
\sum_{s=1}^{n}\sum_{i > r_s} \varsigma_{si}^{2} \leq \sum_{s=1}^{n}\sum_{i > r_s} \sigma_{si}^2 \leq \left(\error\norm{\ten{X}}\right)^2
$$
which also satisfies the linear programming constraints.
\end{proof}

\section{Example Structures}
This section presents structures discovered by our tool and features the necessity of TN-SS.
In each structure, square nodes denote free indices, labeled in the format of ``<index name>-<index size>''. Round nodes represent the constituent tensors, which are linked by contraction edges annotated with respective ranks.

\begin{figure}[h]
    \centering
    \begin{minipage}{0.45\linewidth}
    \includegraphics[width=\linewidth]{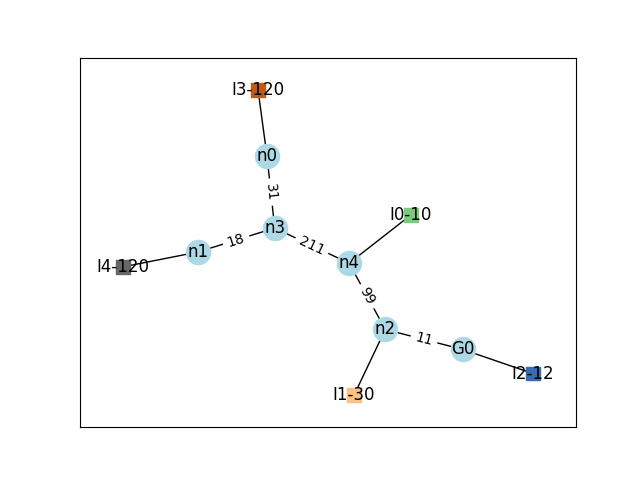}
    \end{minipage}
    ~
    \begin{minipage}{0.45\linewidth}
    \includegraphics[width=\linewidth]{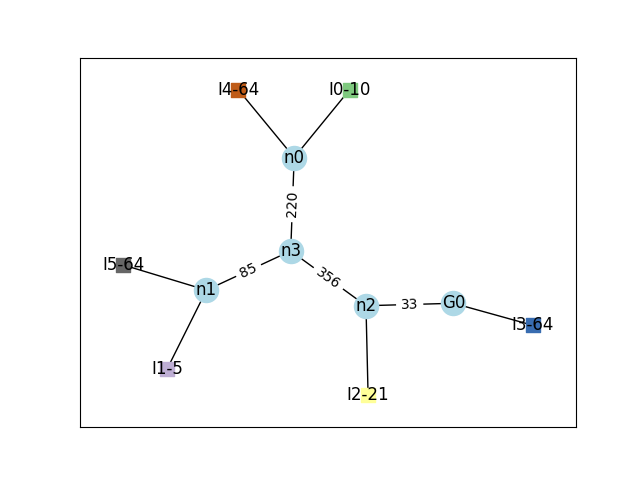}
    \end{minipage}
    
    \begin{minipage}{0.45\linewidth}
        \includegraphics[width=\linewidth]{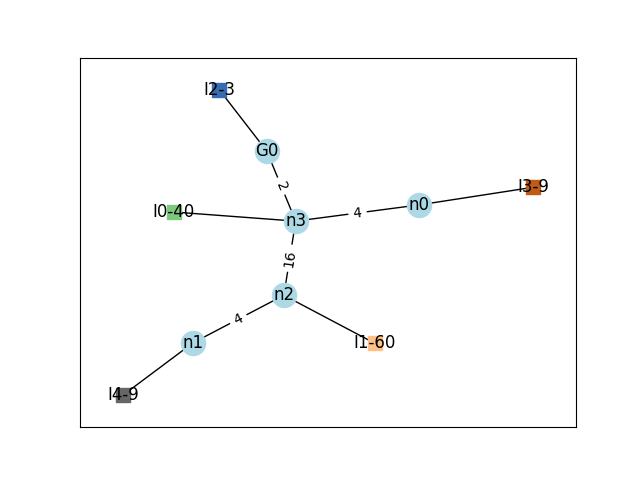}
    \end{minipage}
    \caption{These structures showcase that our tool can discover non-standard structures other than TT, HT, etc. One single internal node suffices to provide good compression ratio in these cases.}
    \label{fig:appendix:non-standard}
\end{figure}

\begin{figure}[h]
    \centering
    \begin{minipage}{0.45\linewidth}
    \includegraphics[width=\linewidth]{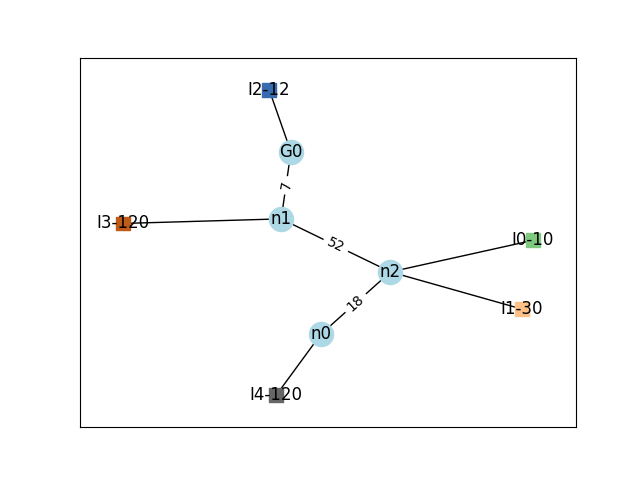}
    \end{minipage}
    ~
    \begin{minipage}{0.45\linewidth}
    \includegraphics[width=\linewidth]{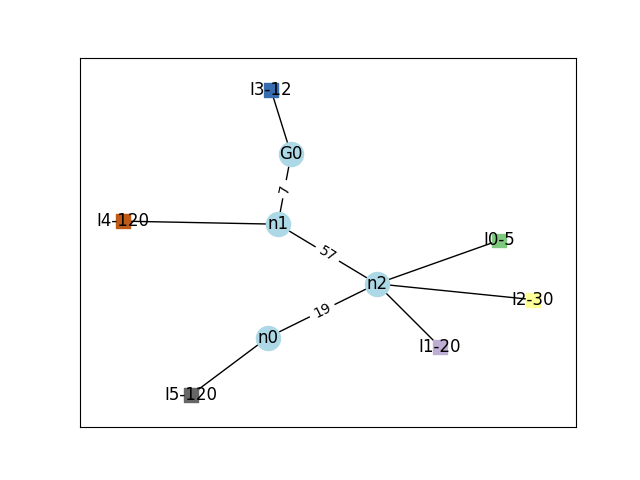}
    \end{minipage}
    \caption{These two structures are similar to tensor trains but they have clustered and reordered indices, which allow them to have better compression ratios than traditional tensor trains.}
    \label{fig:appendix:reorder}
\end{figure}


\section{Additional Experiment Results}
\begin{figure}[t]
    \centering
    \includegraphics[width=0.95\linewidth]{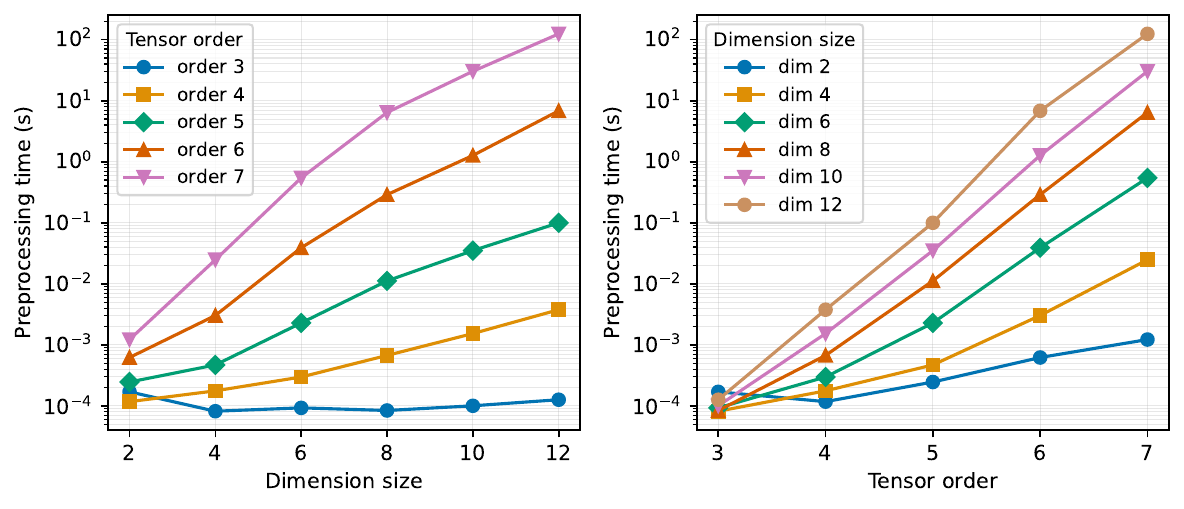}
    \caption{Preprocessing time (in seconds) on synthetic tensors using standard double-precision floating-point arithmetic. Left: Preprocessing time as a function of mode (dimension) size across different tensor orders ($d = 3$ to $7$). Right: Preprocessing time as a function of tensor order across different mode sizes ($n = 2$ to $12$). Both plots use a logarithmic scale for the vertical axis to highlight the scaling characteristics.}
    \label{fig:preprocess-scale}
\end{figure}

\subsection{Preprocessing Complexity and Empirical Boundaries}
To complement the real-world dataset benchmarks presented in \Cref{tab:real}, we isolate the practical scalability boundaries of \algoname's preprocessing phase using controlled synthetic tensors. The preprocessing phase primarily involves constructing the $\Omega$ matrix by computing the singular values for all $2^{d-1}$ feasible matricizations.As illustrated in \Cref{fig:preprocess-scale} (Left), when the tensor order ($d$) is fixed, the wall-clock time scales predictably with the underlying matrix dimensions passed to the standard SVD solvers. Conversely, \Cref{fig:preprocess-scale} (Right) underscores the exponential impact of the tensor order $d$ on the preprocessing footprint. For moderate-order tensors ($d \le 5$) with typical mode sizes, the preprocessing time remains negligible or completes within a few seconds. For higher-order configurations (e.g., $d = 7, n = 12$), the exponential growth term ($\mathcal{O}(n^{1.5d} 2^d)$) becomes prominent, pushing wall-clock times past $10^2$ seconds.These empirical trends justify our focus on moderate-order tensors (up to order 6) in the main evaluation. Because this computational overhead is a one-time setup cost that replaces the expensive, iterative global decompositions required by cyclic topology search baselines, it remains a highly practical and advantageous trade-off for the target empirical regime.

\end{document}